\documentclass[conference]{IEEEtran}
\IEEEoverridecommandlockouts

%
\usepackage{amsmath}
\usepackage{amssymb}
\usepackage{amsthm}
\usepackage{mathtools}
\usepackage{color}
\usepackage{wrapfig}
\usepackage{array}
\usepackage{scrextend}
\usepackage{multirow}
\usepackage{xcolor}
\usepackage{tikz}
\usepackage[margin=0.73in,top=0.437in]{geometry} 
\usetikzlibrary{shapes,arrows,positioning}

\definecolor{green1}{RGB}{81,107,62}
\definecolor{blue1}{RGB}{0,32,96}
\tikzstyle{block} = [rectangle, draw, text centered, rounded corners, minimum height=2em]

\newtheorem{thm}{Theorem}
\newtheorem{defn}{Definition}
\newtheorem{lemma}{Lemma}

\newtheorem{construction}{Construction}
\newtheorem{remark}{Remark}

\newcommand{\ve}[1]{\ensuremath{\mathbf{#1}}}
\newcommand{\e}{\ensuremath{\mathrm{e}}}

\newcolumntype{C}[1]{>{\centering\let\newline\\\arraybackslash\hspace{0pt}}m{#1}}

\usepackage{color}

\hyphenation{op-tical net-works semi-conduc-tor}

\begin{document}

\title{{\textbf{Coding over Sets for DNA Storage}}}

\author{
	\IEEEauthorblockN{
		\textbf{Andreas Lenz}\IEEEauthorrefmark{1},
		\textbf{Paul H. Siegel}\IEEEauthorrefmark{2},
		\textbf{Antonia Wachter-Zeh}\IEEEauthorrefmark{1}, and
		\textbf{Eitan Yaakobi}\IEEEauthorrefmark{3}
	}

	\IEEEauthorblockA{
		\IEEEauthorrefmark{1}Institute for Communications Engineering, Technical University of Munich, Germany
	}
	\IEEEauthorblockA{
		\IEEEauthorrefmark{2}Department of Electrical and Computer Engineering, CMRR, University of California, San Diego, California
	}
	\IEEEauthorblockA{
		\IEEEauthorrefmark{3}Computer Science Department, Technion -- Israel Institute of Technology, Haifa, Israel
	}
	\textbf{Emails}: andreas.lenz@mytum.de, psiegel@ucsd.edu, antonia.wachter-zeh@tum.de, yaakobi@cs.technion.ac.il
	
}\vspace{-1ex}

\maketitle
\thispagestyle{plain}
\pagestyle{plain}
\begin{abstract}
In this paper, we study error-correcting codes for the storage of data in synthetic deoxyribonucleic acid (DNA). We investigate a storage model where data is represented by an unordered set of $M$ sequences, each of length~$L$. Errors within that model are losses of whole sequences and point errors inside the sequences, such as substitutions, insertions and deletions. We propose code constructions which can correct these errors with efficient encoders and decoders. By deriving upper bounds on the cardinalities of these codes using sphere packing arguments, we show that many of our codes are close to optimal.
\end{abstract}\vspace{-.5ex}

\IEEEpeerreviewmaketitle

\section{Introduction}
\vspace{0ex}
DNA-based storage has attracted significant attention due to recent demonstrations of the viability of storing information in macromolecules. This recent increased interest was paved by significant progress in synthesis and sequencing technologies. The main advantages of DNA-based storages over classical storage technologies are very high data densities and long-term reliability without electrical supply. Given the trends in cost decreases of DNA synthesis and sequencing, it is now acknowledged that within the next $10$--$15$ years DNA storage may become a highly competitive archiving technology. 

A DNA storage system consists of three important entities (see Fig.~\ref{fig:DNAStorage}): (1) a DNA synthesizer that produces the strands that encode the data to be stored in DNA. In order to produce strands with acceptable error rate the length of the strands is typically limited to no more than 250 nucleotides; (2) a storage container with compartments that store the DNA strands, although in an unordered manner; (3) a DNA sequencer that reads the strands and transfers them back to digital data. The encoding and decoding stages are external processes to the storage system which convert the binary user data into strands of DNA in such a way that even in the presence of errors, it is possible to reconstruct the original data.

The first large scale experiments that demonstrated the potential of \emph{in vitro} DNA storage were reported by Church et al. who recovered 643 KB of data~\cite{CGK12} and Goldman et al. who accomplished the same task for a 739 KB message~\cite{GBCDLSB13}. Later, in~\cite{GHPPS15}, Grass et al. stored and recovered successfully an 81 KB message by using error-correcting codes. Since then, several groups have built similar systems, storing ever larger amounts of data.  Among these, Erlich and Zielinski~\cite{EZ17}  stored 2.11MB of data with high storage rate,  Blawat et al.~\cite{BGHCTIPC16} successfully stored 22MB, and more recently Organick et al.~\cite{Oetal17} stored 200MB. Yazdi et al.~\cite{YTMZM15} developed a method that offers both random access and rewritable storage.

\begin{figure}%
	\centering
	\vspace{-1.5ex}

	\begin{tikzpicture}[node distance = 3cm, auto, text width = 3.2cm]
	
	\node [block, align=center]  (data) {\textcolor{blue1}{User Binary Data\\\footnotesize100100011110101\\\footnotesize101000111110100}};

	\node [block, align=center, below of= data, node distance = 4.45cm] (con) {Storage Container};
	\node [above left=-0.11cm and -3.55cm of con] (con2) {\includegraphics[width=3.0cm]{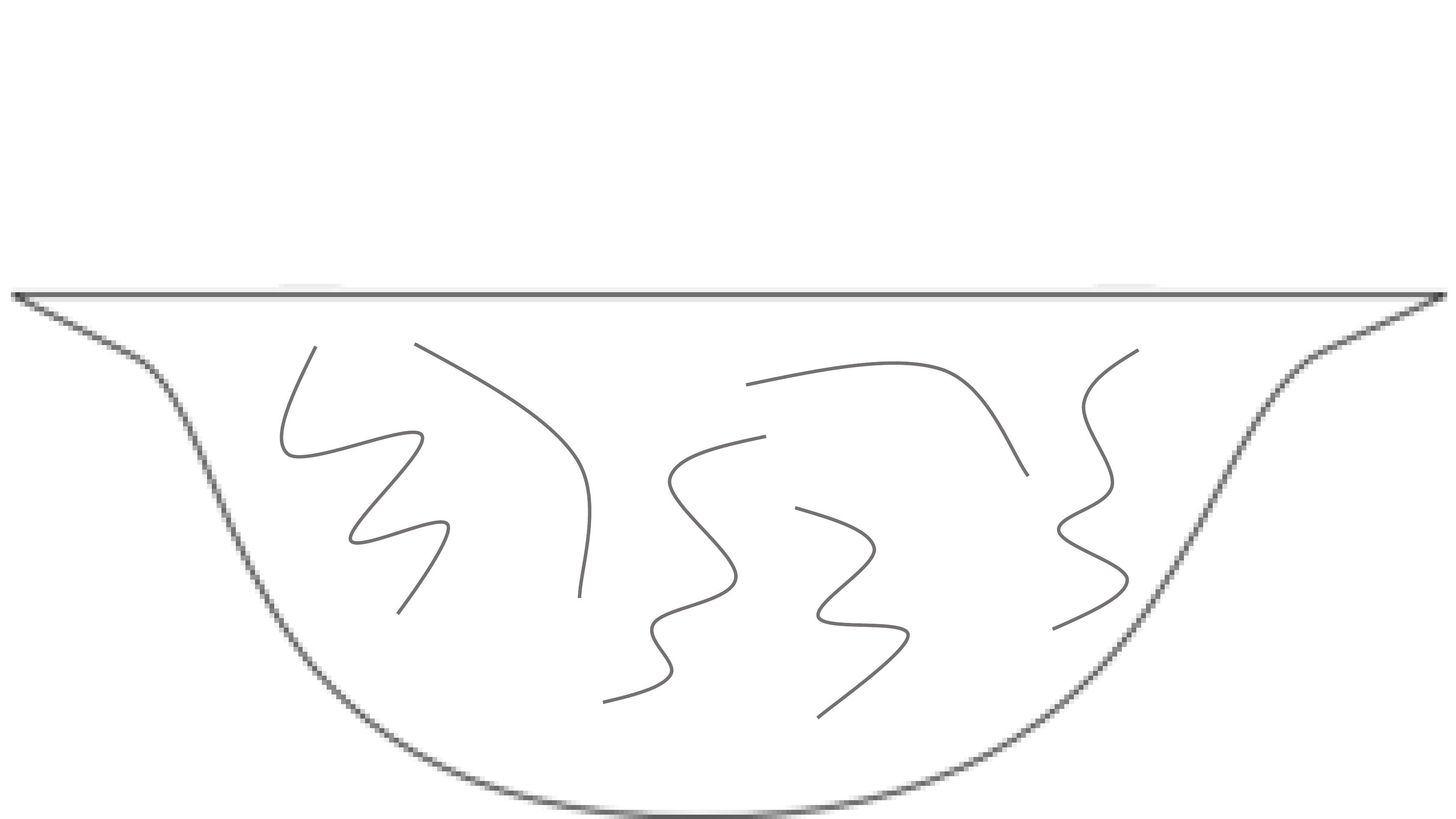}};

	\node [block, align=center, below right=0.08cm and -0.75cm of data, node distance = 2.5cm] (strands) {\textcolor{green1}{DNA strands\\\footnotesize ACTGGGTGCATGCA \\ \footnotesize CGATGCAGTGAGTG} };
	
	\node [block, align=center, below of= strands, node distance = 1.2cm] (synth) {DNA Synthesizer};
	
	\node [block, align=center, below left=0.08cm and -0.75cm of data, node distance = 2.5cm] (strands2) {\textcolor{green1}{DNA strands\\\footnotesize ACTG\textcolor{red}{A}GTGCATGCA \\ \footnotesize CGATGC\textcolor{red}{T}GTGAGT\textcolor{red}{C}G} };
	
	\node [block, align=center, below of= strands2, node distance = 1.2cm] (seq) {DNA Sequencer};
	
	\draw [->, color=red,line width=0.25mm] (data.east) to [out=0,in=90] (strands.north) ;
	\draw [->, color=red,line width=0.25mm] (strands.south) to (synth.north);
	\draw [->, color=red,line width=0.25mm] (synth.south) to [out=270,in=0] (con.east);
	\draw [->, color=red,line width=0.25mm] (con.west) to [out=180,in=270] (seq.south);
	\draw [->, color=red,line width=0.25mm] (seq.north) to (strands2.south);	
	\draw [->, color=red,line width=0.25mm] (strands2.north) to [out=90,in=180] (data.west);
	
	\node[] at (strands |- data) {\hspace{1.3cm} Encoding};
	\node[] at (strands2 |- data) {\hspace{0.3cm}Decoding};
	
	\end{tikzpicture}
	
	\vspace{-1.0ex}\caption[DNA Storage system]
	{Illustration of a DNA-based storage system.}\vspace{-3.5ex}
	\label{fig:DNAStorage}%
\end{figure}

DNA as a storage system has several attributes which distinguish it from any other storage system. The most prominent one is that the strands are not ordered in the memory and thus it is not possible to know the order in which they were stored. One way to address this problem is using block addresses, also called indices, that are stored as part of the strand. Errors in DNA are typically substitutions, insertions, and deletions, where most published studies report that either substitutions or deletions are the most common ones, depending upon the specific technology for synthesis and sequencing~\cite{BGHCTIPC16,EZ17,KC14,Oetal17,RRCHLHNJ13,YYLWLCKC14}. While codes correcting substitution errors were widely studied, much less is known for codes correcting insertions and deletions. The task of error correction becomes even more challenging taking into account the lack of ordering of the strands. The goal of this work is to study and to design error-correcting codes which are specifically targeted towards the special structure of DNA storage systems.

\section{DNA Channel Model} \label{sec:channel:model}
We consider the storage of data in synthetic DNA and build upon the analysis of \cite{HRT17} and \cite{KPM16}. In such a system, data is stored as an unordered \emph{set}
\begin{equation*}
\mathcal{S} = \{ \mathbf{x}_1, \mathbf{x}_2, \dots, \mathbf{x}_M \} \subseteq \mathbb{Z}_2^L,
\end{equation*}
with distinct \textit{sequences} $\mathbf{x}_i$. The parameter $M$ denotes the number of stored sequences and $L$ is the length of each sequence $\mathbf{x}_i$. The set of all possible data sets is therefore
\begin{equation*}
	\mathcal{X}_M^L = \{\mathcal{S} \subseteq \mathbb{Z}_2^L : |\mathcal{S}| = M\},
\end{equation*}
and note that $|\mathcal{X}_M^L| = \binom{2^L}{M}$.
Representing data words as unordered sets is inherently natural, as any information about ordering of the data sequences is lost during the storage.

When a data set $\mathcal{S}\in \mathcal{X}_M^L$ is read from the storage medium, a subset of $M-s$ sequences is obtained, of which additionally~$t$ are erroneous.   This received data set $\mathcal{S}'$ is considered to be the output of a pre-processing algorithm, which produces estimates of the stored sequences with a reconstruction algorithm.

Denote by $(\mathcal{U}, \mathcal{L}, \mathcal{F})$ a partition of $\mathcal{S}$ such that: 
\begin{itemize}
	\item the set $\mathcal{U}$ corresponds to the $M-t-s$ sequences that have been received without errors,
	\item $\mathcal{L}$ is the set of $s$ sequences that have not been read at all,
	\item $\mathcal{F}$ is the set of $t$ sequences that are read with errors.
\end{itemize}
Hence, the channel output is $\mathcal{S}' = \mathcal{U} \cup \mathcal{F}'$, where $\mathcal{F}'=\{\ve{x}'_{f_1}, \dots, \ve{x}'_{f_t} \}$ is the set of received sequences that are in error. In each erroneous sequence $\ve{x}'_{f_i}$, there are at most $\epsilon$ substitution or insertion and deletion errors. Note that in contrast to \cite{KT18}, where the storage of \textit{multisets} with full sequence errors ($\epsilon=L$) are discussed, we consider unordered \textit{sets} with point errors ($\epsilon \leq L$). Since the erroneous sequences $\ve{x}_{f_i}'$ are not necessarily distinct from each other or from the sequences in $\mathcal{U}$, the size of the received set satisfies $M-t-s \leq |\mathcal{S}'| \leq M-s$. In view of our channel model, we will refer to the following definition of an error-correcting code for a DNA storage system.
\begin{defn}
	A code $\mathcal{C} \subseteq \mathcal{X}_M^L$ is called an \textbf{$(s,t,\epsilon)_\mathcal{H}$ error-correcting code}, if it can correct $s$ (or fewer) losses of sequences and $\epsilon$ (or fewer) substitutions in each of $t$ (or fewer) sequences. Similarly, an \textbf{$(s,t,\epsilon)_\mathcal{L}$ error-correcting code} is defined for insertion/deletion errors, where the erroneous sequences have at most $\epsilon$ insertions and deletions.
\end{defn}
Here, the subscripts $\mathcal{H}, \mathcal{L}$ abbreviate the underlying Hamming, respectively Levenshtein metric. With this definition a code $\mathcal{C}$ is a set of codewords, where each codeword is itself a set of $M$ sequences of length $L$. One of the main challenges associated with errors in such codewords is the loss of ordering of the code sequences. Throughout the paper we will use the following definition for the redundancy of a code.
\begin{defn}
	The \emph{redundancy} of a code $\mathcal{C} \subseteq \mathcal{X}_M^L$ is \vspace{-0.5ex}
	$$ r(\mathcal{C}) = \log|\mathcal{X}_M^L| - \log |\mathcal{C}| =  \log\binom{2^L}{M} - \log |\mathcal{C}|.\vspace{-0.5ex}$$
\end{defn}
Here and in the rest of the paper, we take the logarithm to the base $2$. We summarize several comments on the DNA storage model in the following remark.
\begin{remark}
\begin{addmargin}[-1.2em]{0em}
\begin{enumerate}
\item We choose to work here with sets and not multisets of sequences because sequences are assumed to be replicated prior to reading, and the reading process does not necessarily recover all of the copies. Thus it is not possible to distinguish how many times each sequence was stored. For more details, see~\cite{HRT17}.
\item Even though there is no order of the sequences in the set $\mathcal{S} = \{ \mathbf{x}_1, \mathbf{x}_2, \dots, \mathbf{x}_M \}$, for notational purposes we assume they are listed in lexicographic order in the set representation of $\mathcal{S}$. However, this ordering information is not available when reading the sequences. A common and efficient solution to combat the lack of ordering of the sequences is to add an index for each sequence~\cite{YTMZM15,HRT17}. This requires an index of $\lceil \log M\rceil$ bits in each sequence, which limits the maximum number of information bits to be $M(L-\lceil \log M\rceil)$. While this solution is attractive for its simplicity, it introduces already a redundancy of \vspace{-0.5ex}
$$   \log\binom{2^L}{M} - M(L-\lceil \log M\rceil) = c_MM,\vspace{-0.5ex}$$
where $c_M = (\lceil \log M \rceil - \log M) + \log \e - \nu$ with $\nu = o(1)$ and $\nu \leq 1+\log \e$. Hence, every solution which uses indexing already incurs a redundancy of at best roughly $M\log \e$ bits. However, the indexing construction is asymptotically optimal with increasing $L$ \cite{HRT17}. Note that the suboptimality of indexing for multiset codes has been shown in \cite{KT18}.

\item While the value of $L$ is moderate, e.g., in the order of a few hundreds, $M$ is significantly larger. In this work we assume that $L=o(M)$ and usually $M$ can be polynomial in $L$ or exponential in $\beta L$ for some $0<\beta<1$. In any event we require $M\leq 2^L$.

\item We present the results in this work for binary sequences, however most or all of them can be extended to the non-binary case (and, in particular, the quaternary case).

\end{enumerate}
\end{addmargin}

\end{remark}

The results about the redundancy of the proposed constructions and their lower bounds are summarized in Table \ref{tab:construction}.
\vspace{-.3cm}\begin{table}[h]
	\caption{Redundancy of Constructions and Bounds}\vspace{-1ex}
	\centering
	\begin{tabular}{l|C{4.2cm}|c}
		Error correction & Construction & approx. Bound \\
		\hline
		\multirow{3}{*}{$(s,t,L)_{\mathcal{H}}$ or}& $c_MM + (s+2t) (L-\lceil \log M \rceil)$ & \multirow{3}{*}{$(s+t)L+$}  \\ \cline{2-2}
		\multirow{3}{*}{$(s,t,L)_{\mathcal{L}}$}& $(s + 2 t)L$ & \multirow{3}{*}{$t\log M$}  \\ \cline{2-2}
		&  $M^c (s+2t) (L-\log M + \log \e) + M^{1-c} \log \big(\e M^{\frac{c}{2}} \big) - (s+2t) \log \e$ &  \\ \hline
		$(0,1,1)_\mathcal{L}$ & $\log(L+1)$ & $\log(L)-1$ \\ \hline
		$(0,1,1)_\mathcal{H}$ & $2L$ & $\log L$ \\ \hline
		$(0,M,1)_\mathcal{L}$ & $M\log(L+1)$ & $M (\log L -1) $ \\ \hline
		$(0,M,\epsilon)_\mathcal{H}$ & $M\epsilon\lceil \log L \rceil$ & $M\epsilon \log (L/\epsilon)$ \\
	\end{tabular}
	\label{tab:construction}
\end{table}\vspace{-.45cm}

\section{Code Constructions}\label{sec:const}

\subsection{An Index-Based Construction}
The following construction is based on adding an index in front of all sequences $\ve{x}_i$ and using a maximum distance separable (MDS) code over the $M$ sequences. For all $n$ and~$k$, where $k\leq n$ we denote by $\mathsf{MDS}[n,k]$ an MDS code over any field of size at least $n-1$. For all $1\leq i\leq M$ we will use $\ve{I}(i) \in \mathbb{Z}_2^{\lceil \log M \rceil}$ to denote the binary representation of the index $i$ with $\lceil \log M\rceil$ bits. 

In Construction~\ref{con:index:rs}, the sequences $\ve{x}_i = (\ve{I}(i), \ve{u}_i)$ of each codeword set are constructed by writing a binary representation of the index, $\ve{I}(i)$, of length $\lceil \log M \rceil$ in the first part of each sequence. Then, the remaining part $\ve{u}_i$ is viewed as a symbol over the extension field $\mathbb{F}_{2^{L-\lceil \log M \rceil}}$, and $(\ve{u}_1,\ldots,\ve{u}_M)$  will form a codeword in some MDS code\footnote{Note that we assume $M \leq \sqrt{2^L}$ in this section to guarantee the existence of the MDS code. However, the case $M > \sqrt{2^L}$ can always be addressed by employing non-MDS codes.}. In this construction and in the rest of the paper whenever we write the set $\mathcal{S}$ we assume it is a set of $M$ sequences denoted by $\mathcal{S} =  \{ \mathbf{x}_1, \ldots,\mathbf{x}_M \} \in \mathcal{X}_M^L$. 
The following construction is based on the findings in \cite{HRT17}, where index-based constructions are analyzed for the correction of losses only.
\begin{construction} \label{con:index:rs}
For all $M, L$, and a positive integer $\delta$, let $\mathcal{C}_1 (M,L,\delta)$ be the code defined by \vspace{-0.5ex}
	\begin{align*}
	\mathcal{C}_1 (M,L,\delta)= \{ \mathcal{S} \in \mathcal{X}_M^L : \, & \ve{x}_i = (\ve{I}(i), \ve{u}_i),  \\ 
	&(\ve{u}_1, \ldots, \ve{u}_M) \in \mathsf{MDS}[M,M-\delta] \}.
	\end{align*}
\end{construction}

\begin{lemma}
For all $M,L,\delta$, the code $\mathcal{C}_1 (M,L,\delta)$ is an  $(s,t,L)_\mathcal{H}$ error-correcting code for all $s+2t \leq \delta$.
\end{lemma}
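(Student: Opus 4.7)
The plan is to exhibit a two-stage decoder: first demultiplex the received sequences according to their index prefixes, then feed the resulting payload vector (with erasures and possibly errors) into an erasure-and-error decoder for the underlying $\mathsf{MDS}[M,M-\delta]$ code. Concretely, for each $i\in\{1,\ldots,M\}$ the decoder collects all received sequences whose leading $\lceil\log M\rceil$ bits equal $\ve{I}(i)$; if exactly one such sequence is present, its suffix is used as a candidate symbol $\tilde{\ve{u}}_i$, and otherwise (zero or at least two candidates) position $i$ is declared an erasure. The classical property of the MDS code — minimum distance $\delta+1$ — guarantees correct recovery whenever the number of erasures $e$ and errors $f$ in this symbol vector satisfy $e+2f\leq\delta$.

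Next, I would bound $e+2f$ by a careful bookkeeping on indices. For each $i$, let $a_i\in\{0,1\}$ indicate whether the original sequence with index $i$ belongs to the loss-free set $\mathcal{U}$, and let $b_i$ be the number of sequences in $\mathcal{F}'$ whose (possibly corrupted) index prefix equals $\ve{I}(i)$; then $\sum_i a_i=M-t-s$ and $\sum_i b_i=t$. The decoder declares erasures precisely on $E=\{i:a_i+b_i\neq 1\}$, and an error can only occur on $F=\{i:a_i=0,b_i=1\}$, while $G=\{i:a_i=1,b_i=0\}$ is decoded correctly. Letting $X$ denote the number of $\mathcal{U}$-indices that are also hit by at least one $\mathcal{F}'$-sequence, so that $|G|=(M-t-s)-X$, one gets
\begin{equation*}
e+2f\leq |E|+2|F| = M-|G|+|F| = s+t+X+|F|.
\end{equation*}
It then remains to show $X+|F|\leq t$: each of the $t$ erroneous sequences has a single index, which is either a $\mathcal{U}$-index (contributing at most once to $X$) or a non-$\mathcal{U}$-index that is a singleton at its position (contributing at most once to $|F|$), and the two categories are disjoint.

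Combining these inequalities gives $e+2f\leq s+2t\leq\delta$, so MDS decoding returns the correct payload vector $(\ve{u}_1,\ldots,\ve{u}_M)$; reattaching the known indices $\ve{I}(i)$ reconstructs $\mathcal{S}$. I expect the main obstacle to be the counting step, in particular convincing oneself that an erroneous sequence whose index is unchanged but whose payload is wrong costs exactly $2$ to the budget $e+2f$ (one ``lost'' correct symbol plus one erroneous symbol at the same position) rather than more, and that collisions of two or more erroneous sequences on the same previously-unused index actually reduce the budget rather than inflate it — both facts are what make the sufficient condition $s+2t\leq\delta$ tight.
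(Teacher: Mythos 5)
Your proof is correct and follows essentially the same route as the paper: the same index-demultiplexing decoder with erasures on missing/duplicated indices, followed by the same bookkeeping showing the erasure-plus-twice-errors budget is at most $s+2t$ (your $X$ and $|F|$ play exactly the roles of $|\mathcal{F}_I'\cap\mathcal{U}_I|$ and $t'$ in the paper's inequalities). The only nit is that $\sum_i b_i \leq t$ rather than $=t$, since $\mathcal{F}'$ is a set and erroneous outcomes may coincide, but this only strengthens your bound.
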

\begin{proof}
	To begin with, we observe that if we can recover the MDS codeword $\ve{U} = (\ve{u}_1, \ve{u}_2, \dots, \ve{u}_M)$, we can also recover $\mathcal{S}$. Given $\mathcal{S}'$, we create the received word $\ve{U}'$ by declaring 
position $j$ to be an erasure if  the index $\ve{I}(j)$ does not appear or appears more than once in $\mathcal{S}'$. The remaining positions in $\ve{U}'$ are filled with the corresponding symbols $\ve{u}'_j$. We will show that the number of erasures $s'$ and the number of errors $t'$ in $\ve{U}'$ satisfy $s' + 2t' \leq \delta$. Denote by $\mathcal{U}_I, \mathcal{L}_I,\mathcal{F}_I$ the sets of indices (first $\lceil \log M \rceil$ bits) of sequences in $\mathcal{S}$ corresponding to $\mathcal{U}$, $\mathcal{L}$, or $\mathcal{F}$, respectively. Further, $\mathcal{F}_I'$ is the set of indices of received sequences in $\mathcal{S}'$ that are the erroneous outcomes of the sequences in $\mathcal{F}$. First, we have $s' \leq s+t-t' + |\mathcal{F}_I' \cap \mathcal{U}_I|$ where $|\mathcal{F}_I' \cap \mathcal{U}_I|$ accounts for the situation when an erroneous sequence has the same index as an error-free one. Secondly, the number of errors satisfies $t' \leq |\mathcal{F}_I' \cap (\mathcal{F}_I \cup \mathcal{L}_I)|$. Hence, $s'+2t' \leq s+t+t' + |\mathcal{F}_I' \cap \mathcal{U}_I| \leq s+2t \leq \delta$.
\end{proof}
Similarly we obtain the error-correction capability of Construction \ref{con:index:rs} with respect to insertion and deletion errors.
\begin{lemma}
	For all $M,L,\delta$, the code $\mathcal{C}_1(M,L,\delta)$ is an $(s,t,L)_\mathcal{L}$ error-correcting code for all $s+2t \leq \delta$.
\end{lemma}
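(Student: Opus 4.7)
The plan is to adapt the previous proof almost verbatim; the only substantive change is how an estimated MDS symbol is extracted from a received word whose length may no longer equal $L$. The overall strategy is unchanged: from $\mathcal{S}'$ I build an estimate $\ve{U}'$ of the MDS codeword $(\ve{u}_1,\ldots,\ve{u}_M)$, run the MDS decoder (which corrects any combination of $s'$ erasures and $t'$ errors satisfying $s'+2t' \le \delta$), and then reconstruct $\mathcal{S}$ from $\ve{U}$.

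To build $\ve{U}'$ in the Levenshtein setting, for each received word $\ve{y}\in\mathcal{S}'$ with $|\ve{y}|\ge\lceil\log M\rceil$, I would read the leading $\lceil\log M\rceil$ bits as a tentative index $j$ and pad or truncate the remainder to length $L-\lceil\log M\rceil$ to form a candidate symbol; any word shorter than $\lceil\log M\rceil$ is discarded. A position $j$ is declared an erasure whenever the bit pattern $\ve{I}(j)$ appears as the extracted index of $0$ or $\ge 2$ words in $\mathcal{S}'$, and is otherwise filled with the unique candidate symbol.

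The heart of the argument is the observation that erroneous sequences in the Levenshtein model with $\epsilon=L$ enjoy exactly the same latitude as in the Hamming model with $\epsilon=L$: a word in $\mathcal{F}$ can, after its insertions and deletions, present \emph{any} bit pattern in the leading $\lceil\log M\rceil$ positions. Words in $\mathcal{U}$, by contrast, are untouched and thus still present their true index $j\in\mathcal{U}_I$ together with the correct data symbol $\ve{u}_j$. Defining $\mathcal{U}_I,\mathcal{L}_I,\mathcal{F}_I,\mathcal{F}_I'$ exactly as in the previous proof, the same inequalities
\[
s' \le s+t-t'+|\mathcal{F}_I'\cap\mathcal{U}_I|, \qquad t' \le |\mathcal{F}_I'\cap(\mathcal{F}_I\cup\mathcal{L}_I)|
\]
carry over, and summing with the disjointness of $\mathcal{U}_I,\mathcal{L}_I,\mathcal{F}_I$ gives $s'+2t'\le s+t+|\mathcal{F}_I'|\le s+2t\le\delta$, so the MDS decoder succeeds.

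The only delicate point is the length-adjustment for the data portion: padding or truncating could, in principle, make an erroneous word mimic $\ve{u}_j$ at some position in $\mathcal{U}_I$ or $\mathcal{L}_I$. Such coincidences can only convert what was classified as an error into a correct symbol, which only slackens the bound on $s'+2t'$, so they do not affect the worst-case analysis. I expect this is the only step that requires care; everything else is a direct port of the Hamming argument.
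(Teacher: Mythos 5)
Your proposal is correct and follows exactly the route the paper intends: the paper gives no explicit proof for this lemma, stating only that it is obtained ``similarly'' to the Hamming case, and your argument is precisely that adaptation, with the one genuinely new ingredient (extracting an index and a length-adjusted candidate symbol from received words of varying length, discarding words shorter than $\lceil \log M\rceil$) handled correctly. The key inequalities $s' \le s+t-t'+|\mathcal{F}_I'\cap\mathcal{U}_I|$ and $t'\le|\mathcal{F}_I'\cap(\mathcal{F}_I\cup\mathcal{L}_I)|$ do carry over verbatim, since they depend only on which indices the (at most $t$) erroneous words present, not on how the words were corrupted.
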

Note that for the practically important case of losses and combinations of substitution and deletion errors, $\mathcal{C}_1(M,L,\delta)$ is error-correcting, if $s+2t_\mathcal{H} + t_\mathcal{D}\leq \delta$, where $t_\mathcal{H}$ is the number of sequences suffering from substitution errors only and $t_\mathcal{D}$ is the number of sequences with deletion errors. The same also holds for combinations of substitution and insertion errors. For all $M,L,\delta$, the redundancy of the code $\mathcal{C}_1 (M,L,\delta)$ is
$$ r(\mathcal{C}_1 (M,L,\delta)) = c_MM + \delta (L-\lceil \log M \rceil).$$

\subsection{A Construction Based On Constant Weight Codes}

Imposing an ordering (e.g., lexicographic) onto the sequences in $\mathbb{Z}_2^L$, every data set $S \in \mathcal{X}_M^L$ can be represented by a binary vector $\mathbf{v}(\mathcal{S})$ of length $2^L$, where each non-zero entry in $\mathbf{v}(\mathcal{S})$ indicates that a specific sequence is contained in the set $\mathcal{S}$. The possible data sets can therefore be represented\footnote{This representation has been used as a proof technique in \cite{HRT17}.} by the set of constant-weight binary vectors of length $2^L$,
\[ \mathcal{V}_M^L = \{ \mathbf{v} \in \{0,1\}^{2^L} : \mathrm{wt}(\mathbf{v}) = M \}, \]
where $\mathrm{wt}(\mathbf{v})$ denotes the \textit{Hamming weight} of $\mathbf{v}$, i.e., the number of non-zero entries inside the vector $\mathbf{v}$. Using this representation, a sequence loss corresponds to an asymmetric $1 \rightarrow 0$ error inside $\ve{v}(\mathcal{S})$. Errors inside a sequence are either single errors in the Johnson graph, see e.g. \cite{BSSS90}, or single asymmetric $1 \rightarrow 0$ errors, if the erroneous sequence coincides with an already present sequence in $\mathcal{S}$. This suggests the following construction.
\begin{construction} \label{con:constant:weight}
For all $M,L$ and positive integers $s,t$, let $\mathcal{C}_M^L(s,t)\subseteq \mathcal{V}_M^L$ be an $M$-constant-weight code of length $2^L$, which corrects $s$ asymmetric $1 \rightarrow 0$ errors and $t$ errors in the Johnson graph. We then define the following code
	$$ \mathcal{C}_2(M,L,s,t) = \{ \mathcal{S} \in \mathcal{X}_M^L : \ve{v}(\mathcal{S}) \in \mathcal{C}_M^L(s, t) \}. $$
\end{construction}
\begin{lemma}
For all $M,L$ and positive integers $s,t$, the code $\mathcal{C}_2(M,L,s,t) $ is an $(s, t, L)_\mathcal{H}$ error-correcting code.
\end{lemma}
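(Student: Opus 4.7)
The plan is to lift the channel to the constant-weight representation $\ve{v}(\cdot)$ and then invoke the error-correction guarantee of $\mathcal{C}_M^L(s,t)$. Concretely, $\mathcal{S}$ is encoded as the weight-$M$ indicator vector $\ve{v}(\mathcal{S}) \in \mathcal{C}_M^L(s,t) \subseteq \mathcal{V}_M^L$ of length $2^L$, and the received set $\mathcal{S}'$ is mapped to its indicator $\ve{v}(\mathcal{S}')$. Since $\mathcal{S} \leftrightarrow \ve{v}(\mathcal{S})$ is a bijection, recovering $\mathcal{S}$ from $\mathcal{S}'$ is equivalent to recovering $\ve{v}(\mathcal{S})$ from $\ve{v}(\mathcal{S}')$.

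Next, I would translate each channel event into a bit-level operation on $\ve{v}(\mathcal{S})$. Each of the $s$ lost sequences in $\mathcal{L}$ flips a single 1 to a 0 at its coordinate, contributing one pure asymmetric $1 \to 0$ error. For each erroneous sequence $\ve{x}_{f_i} \in \mathcal{F}$, I would split into two sub-cases. If the received $\ve{x}'_{f_i}$ is distinct from every other element of $\mathcal{S}'$, then the coordinate of $\ve{x}_{f_i}$ flips $1 \to 0$ while the coordinate of $\ve{x}'_{f_i}$ flips $0 \to 1$, which is precisely one edge in the Johnson graph $J(2^L,M)$. If $\ve{x}'_{f_i}$ coincides with another element already in $\mathcal{S}'$ (for instance, with some element of $\mathcal{U}$ or another $\ve{x}'_{f_j}$), the $0 \to 1$ contribution is absorbed and only the $1 \to 0$ flip at $\ve{x}_{f_i}$'s coordinate is visible in $\ve{v}(\mathcal{S}')$, giving an extra asymmetric error in place of the Johnson step.

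Aggregating, the transformation from $\ve{v}(\mathcal{S})$ to $\ve{v}(\mathcal{S}')$ consists of the $s$ asymmetric errors from $\mathcal{L}$ together with $t$ additional operations, each of which is either a single Johnson edge or a single asymmetric $1 \to 0$ error. This is exactly the error regime $\mathcal{C}_M^L(s,t)$ is designed to handle, so its decoder returns $\ve{v}(\mathcal{S})$ and hence $\mathcal{S}$.

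The main obstacle I anticipate is justifying the collision sub-case, in which a would-be Johnson edge degenerates into an asymmetric error. The natural phrasing of the guarantee of $\mathcal{C}_M^L(s,t)$ is ``$s$ asymmetric plus $t$ Johnson''; one must argue that replacing any of the $t$ Johnson edges by an asymmetric step only moves the received word ``closer'' to the transmitted codeword, and therefore cannot push it outside the decoder's decision region. Once this monotonicity is in place, the remainder is pure bookkeeping on the three classes $\mathcal{U}, \mathcal{L}, \mathcal{F}$.
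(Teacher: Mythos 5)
Your reduction to the indicator-vector picture and your case analysis of which channel events become asymmetric $1\to 0$ errors versus Johnson-graph edges match the paper's setup exactly. The genuine gap is the one you flag and then defer: when erroneous sequences collide with elements already present in $\mathcal{S}'$ (or with each other, or with a lost sequence), the number of net asymmetric errors in $\ve{v}(\mathcal{S}')$ can grow to as many as $s+t$, which exceeds the budget of $s$ asymmetric errors that $\mathcal{C}_M^L(s,t)$ is defined to correct. Your proposed fix --- that trading a Johnson edge for an asymmetric error ``moves the word closer'' to the codeword and so cannot leave the decoder's decision region --- is not an argument as stated: the guarantee on $\mathcal{C}_M^L(s,t)$ is combinatorial (disjointness of the $(s,t)$-error balls), not metric, and an arbitrary decoder for that error model has no a priori monotone decision regions under Hamming distance. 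The domination claim you need is true, but proving it is precisely the content of the proof, so you cannot relegate it to ``pure bookkeeping.''

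The paper closes this gap constructively. Because every codeword has weight exactly $M$, the decoder can observe $s' = M - \mathrm{wt}(\ve{v}(\mathcal{S}'))$, the exact number of net asymmetric errors; writing $t'$ for the number of Johnson edges, one has $s'+t'\le s+t$ and $t'\le t$. If $s'\le s$, the received vector already lies in the $(s,t)$-error ball around $\ve{v}(\mathcal{S})$ and is decoded directly. If $s'>s$, the decoder inserts $s'-s$ ones at arbitrary positions; each inserted one either cancels an asymmetric error or pairs with one to form a Johnson edge, so the padded word is reachable from $\ve{v}(\mathcal{S})$ by exactly $s$ asymmetric errors and at most $t'+s'-s\le t$ Johnson edges, at which point the decoder for $\mathcal{C}_M^L(s,t)$ applies. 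This weight-restoring padding step is the missing ``monotonicity'' lemma; without it (or an equivalent argument) your proof is incomplete.
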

\begin{proof}
	Denote by $\mathcal{S}'$ the received set after at most $s$ losses of sequences and errors in at most $t$ sequences. Let $s'$ be the number of asymmetric errors and $t'$ be the number of errors in $\ve{v}(\mathcal{S})$ with $s'+t' \leq s+t$ and $t'\leq t$. Note that $s' = M - \mathrm{wt}(\ve{v}(\mathcal{S}'))$ is detectable by the decoder. If $s' \leq s$, then the decoder can directly decode $s'\leq s$ losses and $t'\leq t$ errors in the Johnson graph. If $s' > s$, the decoder adds $s'-s$ (arbitrarily placed) ones to $\ve{v}(\mathcal{S}')$, resulting in exactly $s$ losses and at most $t'+s'-s \leq t$ errors in the Johnson graph.
\end{proof}
Since asymmetric and Johnson graph errors can be represented by one, respectively two substitutions, we can use an $M$-constant-weight subset of any standard error correcting code, which corrects $\tau=s+2t$ errors for $\mathcal{C}_M^L(s,t)$. In \cite[ch.~5.5]{Rot06} it is shown that a $\tau$-error-correcting binary alternant code code of length $2^L$ has dimension at least $2^L-\tau L$. Due to the pigeonhole principle, there is one coset of the alternant code that contains at least $\binom{2^L}{M}\big/2^{\tau L}$ words with constant weight $M$. Hence, there exists a code $\mathcal{C}_M^L(s,t)$, such that
$$ r(\mathcal{C}_2(M,L,s,t)) \leq (s + 2 t)L. $$
This redundancy is smaller than the redundancy of Construction \ref{con:index:rs}, especially for the case $L = o(M)$. 

\subsection{An Improved Indexed-Based Construction}
Construction~\ref{con:index:rs}, which uses indexing, is beneficial for its simplicity in the encoding and decoding procedure, however its redundancy is larger than that of Construction~\ref{con:constant:weight}. On the other hand, Construction~\ref{con:constant:weight} does not provide an efficient encoder and decoder. In this section, we present a construction which introduces ideas from both of these methods. 

The main idea of this construction is to reduce the number of bits allocated for indexing each sequence. This allows a trade-off in redundancy with respect to $L$ and $M$. To simplify notation, we assume here that $M = 2^z$ for some $z \in \mathbb{N}$.
\begin{construction} \label{con:reduce:index}
Denote by $\ve{I}_c(i) \in \mathbb{Z}_2^{(1-c) \log M}$ the $(1-c) \log M$ most significant bits of the binary representation $\ve{I}(i)$ of $i$, where $0\leq c < 1$ and $c \log M \in \mathbb{N}_0$. Further, for $0\leq i\leq M^{1-c}-1$, let $\ve{U}_i = \{ \ve{u}_{iM^c+1}, \dots, \ve{u}_{(i+1)M^c} \}$ denote a set of distinct sequences with the same index $\ve{I}_c(i)$, which are ordered lexicographically and form a symbol over a field. For $\delta \geq 0$, let $\mathcal{C}_3 (M,L,c,\delta) $ be the code defined by 
	\begin{align*}
	\mathcal{C}_3 (M,L,c,\delta) &= \{  \mathcal{S}  \in \mathcal{X}_M^L : \, \ve{x}_i = (\ve{I}_c(i), \ve{u}_i) , \\ &(\ve{U}_1, \dots, \ve{U}_{M^{1-c}}) \in \mathsf{MDS}[M^{1-c},M^{1-c}-\delta] \}. 
	\end{align*}
\end{construction}

Note that there are $M^{1-c}$ groups of sequences which use the same index. \footnote{The symbols of the MDS code are symbols of a finite field with size $\binom{2^LM^{c-1}}{M^c}$ and we therefore require $M^{1-c} \leq \binom{2^LM^{c-1}}{M^c}$.}
\begin{lemma} \label{lemma:reduce:index}
For all $M,L,c,\delta$, the code $\mathcal{C}_3(M,L,c,\delta)$ is an $(s,t,L)$ error-correcting code for all $s+2t \leq \delta$.
\end{lemma}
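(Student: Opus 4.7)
The plan is to imitate the proof of Construction~\ref{con:index:rs}. Given the received set $\mathcal{S}'$, the decoder will construct a putative MDS received word $\ve{U}' = (\ve{U}'_0, \ldots, \ve{U}'_{M^{1-c}-1})$ and then invoke MDS decoding over erasures and errors. To form $\ve{U}'$, for each block index $j \in \{0, \ldots, M^{1-c}-1\}$ I gather the distinct sequences in $\mathcal{S}'$ whose first $(1-c)\log M$ bits equal $\ve{I}_c(j)$; if exactly $M^c$ such sequences are collected, their payload parts become $\ve{U}'_j$, and otherwise position $j$ is declared an erasure. It then suffices to show that the resulting numbers $s'$ and $t'$ of erasures and errors satisfy $s' + 2t' \leq \delta$, after which $\ve{U}$ and hence $\mathcal{S}$ are recovered by MDS decoding.

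To bound $s' + 2t'$ I would use a per-block charging argument. Define $P_j$ as the number of sequences in $\mathcal{L}\cup\mathcal{F}$ whose original index is $\ve{I}_c(j)$, plus the number of sequences in $\mathcal{F}'$ whose received index is $\ve{I}_c(j)$. Each loss contributes $1$ to a unique $P_j$ and each error contributes $1$ to its source block and $1$ to its destination block, so $\sum_j P_j = s + 2t \leq \delta$. I would then show that the contribution of each block to $s' + 2t'$ is at most $P_j$: if $P_j = 0$ then block $j$ is untouched and $\ve{U}'_j = \ve{U}_j$; if $j$ is erased then trivially $P_j \geq 1$; and if $j$ is erroneous then $|\ve{U}_j \triangle \ve{U}'_j| \geq 2$ provides at least one ``missing'' sequence (a loss in $j$ or an error out of $j$) and at least one ``extra'' sequence (an error landing in $j$), which together account for at least two units of $P_j$—either as two distinct perturbations, or as a single in-block error whose source and destination are both $j$ and which on its own contributes $2$ to $P_j$.

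The step I expect to require the most care is the erroneous-block case, because coincidences---an erroneous sequence colliding with one already in $\mathcal{U}$, two erroneous sequences producing the same output, or an in-block error that happens to reproduce a correct payload---perturb how $s'$, $t'$ and $P_j$ compare in individual sub-cases. Such coincidences can only decrease the collected count $n_j$ and therefore only convert erroneous blocks into erased ones, weakening the local constraint rather than strengthening it; nonetheless one must verify in each sub-case that the two witnesses in $\ve{U}_j \triangle \ve{U}'_j$ really do translate into two units of charge on $P_j$. Since the grouping is done purely by the (possibly corrupted) leading index bits and everything else is combinatorial counting, the same argument applies verbatim whether the in-sequence errors are substitutions or insertions and deletions, yielding both $(s,t,L)_\mathcal{H}$ and $(s,t,L)_\mathcal{L}$ error-correction at once.
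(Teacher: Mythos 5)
Your proposal is correct and follows exactly the route the paper intends: the paper omits an explicit proof of this lemma, stating only that it is proven analogously to the lemma for Construction~\ref{con:index:rs}, i.e., by declaring index groups with an inconsistent number of collected sequences as erasures and showing that the resulting erasure and error counts satisfy $s'+2t'\leq s+2t\leq\delta$ for the outer MDS code. Your per-block charging argument (including the observation that an in-group error is charged twice to the same $P_j$, and that collisions only turn errors into erasures) is a sound and careful way to carry out precisely that accounting.
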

Lemma \ref{con:index:rs} is proven similiar to Lemma \ref{lemma:reduce:index}. The redundancy of $\mathcal{C}_3$ can be shown to be approximately
$$ r(\mathcal{C}_3) \approx M^c \delta (L-\log M + \log \e) + M^{1-c} \log \left(\e M^{\frac{c}{2}} \right) - \delta \log \e. $$
\subsection{Special Constructions}
We begin with an observation about the equivalence of $(0,t,\epsilon)_\mathcal{L}$ codes for the case where there are either only insertion or only deletion errors inside the sequences.
\begin{lemma}
	A code $\mathcal{C} \subseteq \mathcal{X}_M^L $ is $(0,t,\epsilon)$ insertion-only correcting if and only if it is $(0,t,\epsilon)$ deletion-only correcting.
\end{lemma}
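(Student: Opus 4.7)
The plan is to lift the classical Levenshtein duality from single sequences to sets. Recall that for two length-$L$ sequences $\mathbf{x}, \mathbf{x}'$, a common supersequence of length at most $L+\epsilon$ reachable by at most $\epsilon$ insertions from each exists if and only if a common subsequence of length at least $L-\epsilon$ reachable by at most $\epsilon$ deletions from each exists; both conditions are equivalent to $2L - 2\cdot\mathrm{LCS}(\mathbf{x},\mathbf{x}') \leq 2\epsilon$, where $\mathrm{LCS}$ denotes the length of the longest common subsequence. I would prove the lemma in contrapositive form: two distinct codewords $\mathcal{S}, \mathcal{S}' \in \mathcal{C}$ admit a common insertion-only channel output if and only if they admit a common deletion-only channel output. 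Since the roles of ``super'' and ``sub'' are symmetric, only one direction needs to be carried out in detail.

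For that direction, suppose $\mathcal{T}$ is a common insertion-only output of $\mathcal{S}$ and $\mathcal{S}'$. I would split $\mathcal{T} = \mathcal{T}_0 \sqcup \mathcal{T}_1$ according to whether the length equals $L$ or strictly exceeds $L$. The length-$L$ elements must have undergone zero insertions, so $\mathcal{T}_0 \subseteq \mathcal{S} \cap \mathcal{S}'$, while each $\mathbf{y} \in \mathcal{T}_1$ admits at least one preimage $\mathbf{x}_\mathbf{y} \in \mathcal{S}$ (and likewise $\mathbf{x}'_\mathbf{y} \in \mathcal{S}'$) from which $\mathbf{y}$ is obtained by at most $\epsilon$ insertions. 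Applying Levenshtein duality to each chosen pair $(\mathbf{x}_\mathbf{y}, \mathbf{x}'_\mathbf{y})$ produces a common subsequence $\mathbf{w}_\mathbf{y}$ of length at least $L-\epsilon$, and the candidate common deletion-only output is $\mathcal{T}' = \mathcal{T}_0 \cup \{\mathbf{w}_\mathbf{y} : \mathbf{y} \in \mathcal{T}_1\}$. I would then verify that $\mathcal{T}'$ is a valid deletion-only output of $\mathcal{S}$ by declaring the erroneous subset to be $\{\mathbf{x}_\mathbf{y} : \mathbf{y} \in \mathcal{T}_1\} \subseteq \mathcal{S}$, which has size at most $|\mathcal{T}_1| \leq t$, and analogously for $\mathcal{S}'$.

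The main obstacle I expect is the set-level bookkeeping. Distinct erroneous sequences in $\mathcal{S}$ can collide in $\mathcal{T}$ after insertion, so the preimage $\mathbf{x}_\mathbf{y}$ is not uniquely determined; after the duality step, the derived $\mathbf{w}_\mathbf{y}$'s may again collide in $\mathcal{T}'$ and with elements of $\mathcal{T}_0$, so $|\mathcal{T}'|$ may be strictly less than $M$. This is permitted by the channel model (which allows $M - t \leq |\mathcal{S}'|$), but care is needed to ensure that the unchanged parts $\mathcal{S} \setminus \mathcal{F}_\mathcal{S}$ and $\mathcal{S}' \setminus \mathcal{F}_{\mathcal{S}'}$ in the two decompositions really coincide with $\mathcal{T}_0$; this can be enforced by assuming without loss of generality that every declared erroneous sequence is strictly altered (otherwise it can be reassigned to the unchanged part). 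The reverse implication is obtained verbatim after swapping the words ``supersequence'' and ``subsequence'' and the roles of $\mathcal{T}_1$-style longer and shorter sequences.
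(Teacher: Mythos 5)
The paper states this lemma without any proof, so there is nothing of record to compare your argument against; judged on its own merits, your proposal has a genuine gap, and it sits exactly at the point you set aside as ``set-level bookkeeping.'' You choose one preimage $\ve{x}_{\ve{y}}\in\mathcal{S}$ per $\ve{y}\in\mathcal{T}_1$ and declare $\{\ve{x}_{\ve{y}}:\ve{y}\in\mathcal{T}_1\}$ to be the erroneous part of $\mathcal{S}$; but when several erroneous sequences of $\mathcal{S}$ collide into the same $\ve{y}$, the unchosen preimages are stranded: they are not declared erroneous, so they would have to appear unchanged in $\mathcal{T}'$, and they do not (your $\mathcal{T}'$ contains only $\mathcal{T}_0$ and the $\ve{w}_{\ve{y}}$'s). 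The natural repair --- declare \emph{all} preimages of $\ve{y}$ erroneous and send each of them to $\ve{w}_{\ve{y}}$ --- requires $\ve{w}_{\ve{y}}$ to be a subsequence of length at least $L-\epsilon$ of every preimage of $\ve{y}$ in $\mathcal{S}$ and in $\mathcal{S}'$ simultaneously, and here the pairwise identity $\mathrm{SCS}=2L-\mathrm{LCS}$ gives no help, because it does not extend to three or more strings: $110$, $010$ and $011$ all reach the common supersequence $0110$ by a single insertion, yet they have no common subsequence of length $2$.

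This is not repairable by more careful bookkeeping, because the direction you carry out in detail actually fails once collisions of three or more sequences are possible. Take $L=7$, $\epsilon=1$, $t=M=3$, $\mathcal{S}=\{1100000,\,0100000,\,0011111\}$ and $\mathcal{S}'=\{0110000,\,1011111,\,1001111\}$. Both sets produce the insertion-only output $\{01100000,\,10011111\}$ (each stored sequence receives exactly one insertion, and two pairs collide), so they are confusable under the $(0,3,1)$ insertion-only channel. Under the deletion-only channel, however, $\mathcal{S}\cap\mathcal{S}'=\emptyset$ forces every element of a common output to have length $6$, and separating that output by Hamming weight forces $1100000$, $0100000$ (from $\mathcal{S}$) and $0110000$ (from $\mathcal{S}'$) to share a single-deletion descendant; the first two share only $100000$, which is not a single-deletion descendant of $0110000$. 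So $\{\mathcal{S},\mathcal{S}'\}$ is $(0,3,1)$ deletion-only correcting but not insertion-only correcting, meaning the equivalence you are trying to prove needs an additional hypothesis ruling out such collisions. Your argument is sound in the collision-free regime, in particular for $t=1$, which is the only case the paper subsequently relies on (Construction~\ref{con:single:insertion}); there the preimages are unique and your application of Levenshtein's duality is correct.
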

Note that a $(0,t,\epsilon)_\mathcal{L}$ deletion-only (or insertion-only) code, with $\epsilon \geq  2$,  is in general not insertion \textit{and} deletion correcting. A counterexample is the code $\mathcal{C} = \{\{0000,1111,1000\} , \{0000,1111,0111\}\}$, which is both $(0,1,2)$ insertion-only and deletion-only correcting, but not $(0,1,2)_\mathcal{L}$ insertion and deletion correcting. 

The following construction is based on Varshamov-Tenengolts (VT) codes \cite{VT65,Lev66} that correct a single insertion/deletion in one of the $M$ sequences. This code can be extended to an arbitrary alphabet size $q$ by applying non-binary VT codes \cite{Ten84}. The construction employs the idea of using single-erasure-correcting code over the checksums. The insertion/deletion can then be corrected using the corresponding checksum. Note that this idea is similar to the concept of tensor product codes \cite{Wol06}. 
\begin{defn}
	The checksum $s_L(\ve{x})$ of $\ve{x} \in \mathbb{Z}_2^L$ is defined by\vspace{-1ex}$$ s_L(\ve{x}) = \sum_{i=1}^{L} ix_i \bmod (L+1). $$
\end{defn}
\begin{construction} \label{con:single:insertion}
	For an integer $a$, with $0\leq a \leq L$, the code construction $\mathcal{C}_4(M,L,a) $ is given by
	\begin{align*}
	\mathcal{C}_4(M,L,a) = \bigg\{ \mathcal{S} \in \mathcal{X}_M^L :\sum_{i=1}^{M} s_L(\ve{x}_i) \equiv a \bmod (L+1) \bigg\}.
	\end{align*}
\end{construction}
\begin{lemma}
For all $M,L,a$, the code $\mathcal{C}_4(M,L,a)$ is a $(0,1,1)_\mathcal{L}$ error-correcting code.
\end{lemma}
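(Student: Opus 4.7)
The plan is to reduce the decoding problem to standard Varshamov--Tenengolts decoding applied to a single sequence. The key observation is that after a single insertion or deletion, the erroneous sequence has length $L\pm 1$, so it is distinguishable by length from the $M-1$ untouched sequences of length $L$; hence the decoder can immediately locate which sequence was corrupted (or declare that no error occurred if all received sequences still have length $L$).

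First I would fix an arbitrary codeword $\mathcal{S} = \{\ve{x}_1,\ldots,\ve{x}_M\} \in \mathcal{C}_4(M,L,a)$ and a received set $\mathcal{S}'$ obtained by applying at most one insertion or deletion to exactly one of its sequences; without loss of generality say $\ve{x}_M$ is corrupted to $\ve{x}'_M$ of length $L-1$ or $L+1$, while the remaining sequences $\ve{x}_1,\ldots,\ve{x}_{M-1}$ of length $L$ are received intact. Since $|\ve{x}'_M| \neq L$, the erroneous sequence cannot coincide with any element of $\mathcal{U}$, so $|\mathcal{S}'|=M$ and the decoder uniquely identifies $\ve{x}'_M$ as the sequence whose length differs from $L$.

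Next I would compute the induced target checksum for the corrupted sequence. From the defining constraint of $\mathcal{C}_4$,
\begin{equation*}
s_L(\ve{x}_M) \equiv a - \sum_{i=1}^{M-1} s_L(\ve{x}_i) \pmod{L+1},
\end{equation*}
and the right-hand side is fully computable by the decoder from $\mathcal{S}' \setminus \{\ve{x}'_M\}$. Call this value $b$. Then $\ve{x}_M$ is known to lie in the binary VT code $\mathrm{VT}_b(L) = \{\ve{y} \in \mathbb{Z}_2^L : s_L(\ve{y}) \equiv b \pmod{L+1}\}$, and $\ve{x}'_M$ is obtained from $\ve{x}_M$ by a single insertion or deletion. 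By the classical Varshamov--Tenengolts / Levenshtein result \cite{VT65,Lev66}, this suffices to recover $\ve{x}_M$ uniquely, after which $\mathcal{S} = \{\ve{x}_1,\ldots,\ve{x}_{M-1},\ve{x}_M\}$ is reconstructed.

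The only step that requires any care is the identification of the corrupted sequence, which relies crucially on the fact that $|\mathcal{S}'|=M$ in the insertion/deletion regime (an insertion or deletion changes the length, so no two received sequences can merge); this is where the distinction between the Hamming and Levenshtein settings matters, since under a substitution the same reduction would fail because $\ve{x}'_M$ could then coincide with some $\ve{x}_i$, $i<M$, and the corrupted sequence could no longer be pinpointed by length alone. Once the erroneous sequence is located, the rest of the argument is a direct invocation of VT decoding and a modular subtraction, so I do not foresee any further obstacle.
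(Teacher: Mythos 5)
Your proposal is correct and follows essentially the same route as the paper's proof: identify the corrupted sequence by its length, recover its target checksum by modular subtraction from the code constraint, and invoke VT/Levenshtein decoding. The extra observations you add (that the corrupted sequence cannot merge with an intact one, and the contrast with the substitution setting) are accurate but not needed beyond what the paper already argues.
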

\begin{proof}
	Assume there has been a single insertion or deletion in the $k$-th sequence. After the reading process, the $M-1$ error-free sequences can be identified as they have length exactly $L$. The checksum deficiency is given by
	\[ a - \sum_{i \in \mathcal{U}} s_L(\ve{x}_i) \bmod (L+1) = s_L(\ve{x}_k). \]
	The error in $\ve{x}_k$ is corrected by decoding into the VT code with checksum $s_L(\ve{x}_k)$.
\end{proof}
Based on the pigeonhole principle there exists $0\leq a \leq L$ such that the redundancy of the code $\mathcal{C}_4(M,L,a)$ satisfies
$$ r(\mathcal{C}_4(M,L,a)) \leq \log (L+1). $$
As we will show in Theorem \ref{thm:bound:levenshtein}, the redundancy of any $(0,1,1)_\mathcal{L}$ error-correcting code is at least $\log(L+2)-1$, and thus Construction \ref{con:single:insertion} is close to optimal. 

Using VT codes, we propose another construction of $(0,M,1)_\mathcal{L}$ error-correcting codes. That is, the code can correct a single deletion or insertion in every sequence.
\begin{construction} \label{con:multiple:insertion}
	Let $a \in \mathbb{N}_0$, with $0\leq a \leq L$. Then,
	$$\mathcal{C}_5(M,L,a) \hspace{-0.25ex}=  \hspace{-0.25ex}\{ \mathcal{S} \hspace{-0.25ex} \in \hspace{-0.25ex} \mathcal{X}_M^L  \hspace{-0.25ex}: \hspace{-0.25ex} s_L(\ve{x}_i)  \hspace{-0.25ex}\equiv \hspace{-0.25ex} a \bmod (L \hspace{-0.15ex} + \hspace{-0.15ex}1),\forall \, 1 \hspace{-0.25ex}\leq  \hspace{-0.25ex}i  \hspace{-0.25ex}\leq  \hspace{-0.25ex}M \}.$$
\end{construction}
\begin{lemma}
	The code $\mathcal{C}_5(M,L,a)$ is a $(0,M,1)_\mathcal{L}$ error-correcting code.
\end{lemma}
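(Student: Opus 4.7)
The plan is to reduce the problem to applying the standard single-indel-correcting property of Varshamov--Tenengolts (VT) codes sequence by sequence, and to verify that the set structure of the received word does not destroy information through duplicate collisions. Recall that every codeword of $\mathcal{C}_5(M,L,a)$ consists of $M$ sequences, each of which lies in the VT code $\mathrm{VT}_a(L)=\{\ve{x}\in\mathbb{Z}_2^L : s_L(\ve{x})\equiv a \bmod(L+1)\}$, and $\mathrm{VT}_a(L)$ is known to correct a single insertion or deletion.

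First I would describe the decoder. Given a received set $\mathcal{S}'$ (with $s=0$, so no sequence losses, and up to one indel in each of up to $M$ sequences), the decoder partitions $\mathcal{S}'$ by length: sequences of length $L$ have undergone no error and are already valid VT codewords, while sequences of length $L-1$ or $L+1$ each have suffered exactly one deletion or insertion respectively. For each received $\ve{y}\in\mathcal{S}'$ of length $L\pm 1$, the decoder applies VT decoding with parameter $a$ to obtain the unique $\hat{\ve{x}}\in\mathrm{VT}_a(L)$ at indel distance at most one from $\ve{y}$. The output is the set of all such $\hat{\ve{x}}$ together with the length-$L$ members of $\mathcal{S}'$.

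The key step is to verify correctness, which amounts to showing two things: (i) each received $\ve{y}$ decodes to the original $\ve{x}_i$ that produced it, and (ii) no two distinct originals $\ve{x}_i\neq \ve{x}_j$ in $\mathcal{S}$ collapse to the same element of $\mathcal{S}'$. Point (i) is immediate from the single-indel-correcting property of $\mathrm{VT}_a(L)$. For point (ii), suppose $\ve{x}_i$ and $\ve{x}_j$ both produce the same $\ve{y}$ after at most one indel each. If $|\ve{y}|=L$, then both errors must be the trivial (no-error) event, giving $\ve{x}_i=\ve{y}=\ve{x}_j$, a contradiction. If $|\ve{y}|=L\pm 1$, then $\ve{y}$ is within indel distance one of both $\ve{x}_i$ and $\ve{x}_j$, contradicting the uniqueness of VT decoding. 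Hence all originals appear distinctly in $\mathcal{S}'$, and individual VT decoding recovers $\mathcal{S}$ exactly.

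The main (and only) obstacle worth flagging is this collision-avoidance argument in step (ii), since the set-valued channel output can in principle merge duplicate outputs and thereby drop information; everything else is a direct invocation of the classical VT guarantee. Since the argument shows that no such merging can occur when the $M$ originals all lie in the same VT code, the construction inherits its correctness from $\mathrm{VT}_a(L)$ with no further work.
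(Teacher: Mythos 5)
Your proof is correct and follows essentially the same route as the paper: every sequence of a codeword lies in the VT code $\mathrm{VT}_a(L)$, so each received sequence can be decoded independently. The collision-avoidance argument you add (distinct VT codewords cannot merge into one element of the received set, since their single-indel balls are disjoint and a length-$L$ output forces the no-error event) is a detail the paper leaves implicit, and it is handled correctly.
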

By Construction \ref{con:multiple:insertion}, all sequences $\ve{x}_i$ have the same checksum $a$, which allows to correct single insertions or deletions in each sequence. It is known \cite{Lev66} that the number of words satisfying $s_L(\ve{x}) = 0 \bmod (L+1)$ is at least $2^L/(L+1)$. Therefore the redundancy of Construction \ref{con:multiple:insertion} is at most
\begin{align*}
	r(\mathcal{C}_5(M,L,0))  \leq   M \left(\log(L+1) +\hspace{-0.25ex} \frac{M\log \e}{2^L/(L+1)-M}\right).
\end{align*}
With our assumption $M=2^{\beta L}$, we obtain a redundancy of $r(\mathcal{C}_5(M,L,0)) \approx M \log(L+1)$. The next construction can be used to correct $\epsilon$ substitution errors in each sequence. Let $\mathcal{C}[L,\epsilon]$ a binary $\epsilon$-error-correcting code of length $L$.
\begin{construction} \label{con:hamming} For all $M,L$, and $\epsilon$ we define the code
	$$\mathcal{C}_6(M,L,\epsilon) = \{ \mathcal{S} \in \mathcal{X}_M^L : \mathcal{S} \subseteq \mathcal{C}[L,\epsilon]\},$$
\end{construction}
\begin{lemma}
	The code $\mathcal{C}_6(M,L,\epsilon)$ is a $(0,M,\epsilon)_\mathcal{H}$ error-correcting code.
\end{lemma}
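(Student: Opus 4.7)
The plan is to show that with $s=0$, the absence of sequence losses combined with the minimum distance of $\mathcal{C}[L,\epsilon]$ allows each received sequence to be decoded independently, and crucially that the set structure of $\mathcal{S}'$ does not cause any collapse of distinct codewords into coincident received words.

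First I would observe that, since $s=0$, all $M$ sequences of $\mathcal{S}$ are present in the received data, each suffering at most $\epsilon$ substitution errors. The next step is to argue that $|\mathcal{S}'|=M$, i.e. no two corrupted sequences coincide. Because $\mathcal{C}[L,\epsilon]$ is $\epsilon$-error-correcting, its minimum Hamming distance satisfies $d(\mathcal{C}[L,\epsilon]) \geq 2\epsilon+1$. Hence for any two distinct $\ve{x}_i, \ve{x}_j \in \mathcal{S} \subseteq \mathcal{C}[L,\epsilon]$, and their respective corrupted versions $\ve{x}'_i, \ve{x}'_j$, the triangle inequality gives
\[
    d_H(\ve{x}'_i, \ve{x}'_j) \geq d_H(\ve{x}_i, \ve{x}_j) - 2\epsilon \geq 1,
\]
so the corrupted sequences remain pairwise distinct and the received set has exactly $M$ elements.

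Next I would apply the decoder of $\mathcal{C}[L,\epsilon]$ to each received sequence $\ve{x}'_i \in \mathcal{S}'$ independently. Since each $\ve{x}'_i$ lies within Hamming distance $\epsilon$ of its originating codeword $\ve{x}_i \in \mathcal{C}[L,\epsilon]$, and all other codewords are at distance greater than $\epsilon$, the decoder returns $\ve{x}_i$ uniquely. Collecting these decoded codewords yields precisely the original set $\mathcal{S}$, proving the claim.

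There is no real obstacle in this argument; the only non-trivial point is to verify that the set representation does not destroy information, which is handled by the minimum distance bound above. The proof is therefore a direct consequence of independent per-sequence decoding together with the distance-preservation property of $\mathcal{C}[L,\epsilon]$.
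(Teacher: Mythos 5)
Your proof is correct and takes essentially the same approach as the paper, whose proof is a one-line remark that the claim is immediate because every sequence is a codeword of an $\epsilon$-error-correcting code and can therefore be decoded individually. Your extra verification via the triangle inequality that the corrupted sequences remain pairwise distinct (so the received set still has $M$ elements and no stored sequence is lost to a collision) is a worthwhile detail that the paper leaves implicit.
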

The proof is immediate, since every sequence is a codeword of a code that can correct $\epsilon$ errors. For $\mathcal{C}[L,\epsilon]$ we use a binary $\epsilon$-error correcting alternant code of length $L$, which has redundancy at most $\epsilon \lceil \log L \rceil$ \cite[ch. 5.5]{Rot06} and thus obtain a code $\mathcal{C}_6(M,L,\epsilon)$ with redundancy at most
	$$ r(\mathcal{C}_6(M,L,\epsilon))  \leq   M\left(\epsilon \lceil\log L\rceil +  \frac{M \log \e}{2^{L-\epsilon \lceil \log L \rceil} -M} \right),$$
if $M \leq 2^{L-\epsilon \lceil \log L \rceil}$. With our assumption $M = 2^{\beta L}$, the redundancy is roughly $r(\mathcal{C}_6(M,L,\epsilon)) \approx M\epsilon \lceil\log L\rceil$.

\section{Upper Bounds}
In this section we derive non-asymptotic sphere packing upper bounds on codes within the presented storage model.
\begin{defn}
	The \textit{error ball} $B^\mathcal{H}_{t,\epsilon}(\mathcal{S})$ $[B^\mathcal{L}_{t,\epsilon}(\mathcal{S})]$ is defined to be the set of all possible received sets $\mathcal{S}' = \mathcal{U} \cup \mathcal{F}'$ after $t$ (or fewer) sequences of $\mathcal{S} \in \mathcal{X}_M^L$ have been distorted by $\epsilon$ (or fewer) substitution [insertion/deletion] errors each.
\end{defn}
\begin{defn}
	The \textit{error ball} $B^\mathcal{H}_{\epsilon}(\ve{x})$ $[B^\mathcal{L}_{\epsilon}(\ve{x})]$ around $\ve{x} \in \mathbb{Z}_2^L$ is defined to be the set of all possible received vectors $\ve{x}'\neq \ve{x}$, after $\epsilon$ (or fewer) substitutions [insertions/deletions].
\end{defn}
\begin{thm} \label{thm:bound:hamming} The cardinality of any $(0,t,\epsilon)_\mathcal{H}$ error-correcting code $\mathcal{C} \subseteq \mathcal{X}_M^L$ satisfies\vspace{-1ex}
	$$|\mathcal{C}| \leq \frac{\sum_{i=M-t}^{M}\binom{2^L}{i}}{(B^\mathcal{H}_\epsilon-(t-1)N^\mathcal{H}_\epsilon)^t},$$
	where $B^\mathcal{H}_\epsilon = \sum_{i=1}^{\epsilon} \binom{L}{i}$ is the size of the $\epsilon$-error ball and $N_\epsilon^\mathcal{H} = \sum_{i=0}^{\epsilon-1}\binom{L-1}{i}$ is the maximum intersection size of two $\epsilon$-error balls around two distinct words.
\end{thm}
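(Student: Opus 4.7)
The approach is a standard sphere-packing argument. Since $\mathcal{C}$ is $(0,t,\epsilon)_\mathcal{H}$ error-correcting, the balls $B^\mathcal{H}_{t,\epsilon}(\mathcal{S})$ are pairwise disjoint over $\mathcal{S}\in\mathcal{C}$, so $|\mathcal{C}|\cdot \min_{\mathcal{S}}|B^\mathcal{H}_{t,\epsilon}(\mathcal{S})| \le N$, where $N$ is the total number of possible received sets. Because no sequences are lost ($s=0$), each $\mathcal{S}'$ arises by replacing up to $t$ of the $M$ stored sequences by $\epsilon$-close perturbations that may coincide with other (altered or unaltered) sequences, so $|\mathcal{S}'|\in\{M-t,\ldots,M\}$. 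Bounding the number of subsets of $\mathbb{Z}_2^L$ of those sizes yields the numerator $N \le \sum_{i=M-t}^{M}\binom{2^L}{i}$.

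Next I would produce the uniform lower bound on $|B^\mathcal{H}_{t,\epsilon}(\mathcal{S})|$. Fix any $t$ sequences $\ve{x}_{j_1},\ldots,\ve{x}_{j_t}\in\mathcal{S}$ and define the exclusive ball $A_k := B^\mathcal{H}_\epsilon(\ve{x}_{j_k}) \setminus \bigcup_{i\ne k} B^\mathcal{H}_\epsilon(\ve{x}_{j_i})$. A union bound, combined with the fact that any pairwise intersection of two distinct $\epsilon$-error balls has size at most $N^\mathcal{H}_\epsilon$ (verified by a standard split of $\mathbb{Z}_2^L$ along the differing coordinate when the two centers lie at Hamming distance one, which is the worst case), gives $|A_k| \ge B^\mathcal{H}_\epsilon - (t-1)N^\mathcal{H}_\epsilon$. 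To each tuple $(\ve{y}_1,\ldots,\ve{y}_t)\in A_1\times\cdots\times A_t$ I associate $\mathcal{S}' := (\mathcal{S}\setminus\{\ve{x}_{j_k}\}_{k=1}^t)\cup\{\ve{y}_k\}_{k=1}^t \in B^\mathcal{H}_{t,\epsilon}(\mathcal{S})$. The $\ve{y}_k$ are pairwise distinct by construction, and each $\ve{y}_k$ can be recovered from $\mathcal{S}'$ as the unique element lying in $B^\mathcal{H}_\epsilon(\ve{x}_{j_k})$ but in no other $B^\mathcal{H}_\epsilon(\ve{x}_{j_i})$. Hence distinct tuples produce distinct received sets, and multiplying the per-coordinate bound gives $|B^\mathcal{H}_{t,\epsilon}(\mathcal{S})| \ge (B^\mathcal{H}_\epsilon - (t-1)N^\mathcal{H}_\epsilon)^t$. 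Combining with the bound on $N$ yields the theorem.

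The principal obstacle I foresee is making the tuple-to-set map genuinely injective: if some $\ve{y}_k$ happens to equal an unaltered sequence $\ve{x}_{j'}\in\mathcal{S}$ with $j'\notin\{j_1,\ldots,j_t\}$, it gets absorbed into $\mathcal{S}'$ and different tuples may collapse to the same received set. The cleanest remedy is to further restrict $A_k$ to $A_k\setminus\mathcal{S}$, which loses only $O(M)$ elements per coordinate---negligible in the regime $M \ll B^\mathcal{H}_\epsilon$ targeted by the paper, so the stated bound absorbs the slack. The remaining bookkeeping is the routine combinatorial verification of the claimed expressions for $B^\mathcal{H}_\epsilon$ and $N^\mathcal{H}_\epsilon$.
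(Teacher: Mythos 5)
Your argument is essentially the paper's own proof: the same sphere-packing setup with numerator $\sum_{i=M-t}^{M}\binom{2^L}{i}$ counting all possible received sets of size $M-t$ to $M$, and the same lower bound on $|B^\mathcal{H}_{t,\epsilon}(\mathcal{S})|$ obtained by assigning to each of the $t$ corrupted sequences an exclusive set of at least $B^\mathcal{H}_\epsilon-(t-1)N^\mathcal{H}_\epsilon$ outcomes and multiplying. The one place you go beyond the paper is the injectivity caveat, and it is worth a comment: the collision of an erroneous outcome $\ve{y}_k$ with an unaltered sequence of $\mathcal{S}$ is a real issue that the paper's proof silently ignores, so you are right to flag it. However, your proposed remedy does not rescue the \emph{stated} bound: restricting $A_k$ to $A_k\setminus\mathcal{S}$ yields the denominator $\bigl(B^\mathcal{H}_\epsilon-(t-1)N^\mathcal{H}_\epsilon-M\bigr)^t$, and your justification that the loss is negligible because $M\ll B^\mathcal{H}_\epsilon$ has the regime backwards. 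The paper assumes $L=o(M)$, typically $M=2^{\beta L}$, while $B^\mathcal{H}_\epsilon=\sum_{i=1}^{\epsilon}\binom{L}{i}=O(L^\epsilon)$; already for $\epsilon=1$ (the case the paper uses to extract the $t\log L$ redundancy estimate) you would be subtracting $M\geq L=B^\mathcal{H}_1$ from each factor, making the bound vacuous. So either you accept the same unaddressed gap as the paper, or you prove a strictly weaker statement; as written, the "absorbed slack" claim is the one step that would fail.
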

\begin{proof}
	We derive a lower bound on $|B^\mathcal{H}_{t,\epsilon}(\mathcal{S})|$. To each of the $t$ erroneous sequences we can associate a unique set of at least $B^\mathcal{H}_\epsilon-(t-1)N^\mathcal{H}_\epsilon$ distinct words in the substitution error ball. This is because there are $B^\mathcal{H}_\epsilon$ elements in the substitution ball and there are at most $N^\mathcal{H}_\epsilon$ elements in common with each of the $t-1$ other erroneous sequences. Therefore, we get $(B^\mathcal{H}_\epsilon-(t-1)N^\mathcal{H}_\epsilon)^t$ possible unique received sets. The nominator counts all possible received sets of size $M-t$ to $M$, which yields the bound by a sphere packing argument. The value for $N_\epsilon^\mathcal{H}$ is known from \cite{Lev01}.
\end{proof}
Using this bound yields for small $t$ and $\epsilon=1$ a minimum redundancy of approximately $t \log(L)$.
\begin{thm} \label{thm:bound:levenshtein} The cardinality of any $(0,t,\epsilon)_\mathcal{L}$ error-correcting code $\mathcal{C} \subseteq \mathcal{X}_M^L$ satisfies
	$$|\mathcal{C}| \leq \frac{\binom{2^L}{M-t}\binom{2^{L+\epsilon}}{t}}{\binom{M}{t}(S_\epsilon^\mathcal{I} - (t-1) N_\epsilon^\mathcal{I} )^t},$$
	where $S^\mathcal{I}_\epsilon = \sum_{i=0}^{\epsilon} \binom{L+\epsilon}{i}$ is the size of the $\epsilon$-insertion sphere and $N_\epsilon^\mathcal{I} = \sum_{i=0}^{\epsilon-1}\binom{L+\epsilon}{i}(1-(-1)^{\epsilon-i})$ is the maximum intersection of two $\epsilon$-insertion spheres of two distinct words.
\end{thm}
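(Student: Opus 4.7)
The plan is to adapt the sphere-packing argument of Theorem~\ref{thm:bound:hamming} to the Levenshtein setting, specializing to the sub-channel in which exactly $t$ sequences are each corrupted by exactly $\epsilon$ insertions. Since a $(0,t,\epsilon)_\mathcal{L}$ code must, in particular, correct this sub-family of error patterns, the corresponding sub-balls are pairwise disjoint, and the claimed bound will follow by comparing a lower bound on such a sub-ball to an upper bound on the universe of admissible received sets of this form.

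Fix a codeword $\mathcal{S}=\{\ve{x}_1,\ldots,\ve{x}_M\}$ and consider outputs of the form $\mathcal{S}' = (\mathcal{S}\setminus\mathcal{F})\cup\mathcal{F}'$, where $\mathcal{F}\subseteq\mathcal{S}$ has size $t$ and each element of $\mathcal{F}'$ is obtained from the corresponding element of $\mathcal{F}$ by exactly $\epsilon$ insertions, and so has length $L+\epsilon$. Mirroring the Hamming proof, for any fixed $\mathcal{F}=\{\ve{x}_{f_1},\ldots,\ve{x}_{f_t}\}$ and each $i$, I would carve out a set $W_i$ of at least $S_\epsilon^\mathcal{I}-(t-1)N_\epsilon^\mathcal{I}$ words in the $\epsilon$-insertion sphere of $\ve{x}_{f_i}$ that lie outside the insertion spheres of the other $t-1$ sequences in $\mathcal{F}$; this is possible because removing the at most $(t-1)N_\epsilon^\mathcal{I}$ words shared with those spheres leaves at least that many. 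The $W_1,\ldots,W_t$ are pairwise disjoint, so every tuple $(\ve{y}_1,\ldots,\ve{y}_t)\in W_1\times\cdots\times W_t$ yields an unordered set $\{\ve{y}_1,\ldots,\ve{y}_t\}$ of $t$ distinct sequences in which each $\ve{y}_i$ is recognizable by its $W_i$-membership; the tuple-to-set map is therefore injective, giving $(S_\epsilon^\mathcal{I}-(t-1)N_\epsilon^\mathcal{I})^t$ distinct erroneous parts per choice of $\mathcal{F}$. Finally, since every element of $\mathcal{F}'$ has length $L+\epsilon>L$, the length-$L$ part of $\mathcal{S}'$ always coincides with $\mathcal{S}\setminus\mathcal{F}$, so different choices of $\mathcal{F}$ produce different received sets. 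Multiplying by the $\binom{M}{t}$ choices of $\mathcal{F}$ yields $|B_{t,\epsilon}^\mathcal{L}(\mathcal{S})| \geq \binom{M}{t}(S_\epsilon^\mathcal{I}-(t-1)N_\epsilon^\mathcal{I})^t$.

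For the numerator, any received set of the above shape is the disjoint union of an $(M-t)$-subset of $\mathbb{Z}_2^L$ and a $t$-subset of $\mathbb{Z}_2^{L+\epsilon}$, bounding the ambient universe by $\binom{2^L}{M-t}\binom{2^{L+\epsilon}}{t}$. Sphere-packing then yields the stated inequality, and the value of $N_\epsilon^\mathcal{I}$ as the maximum intersection of two distinct $\epsilon$-insertion spheres can be cited from known Levenshtein-sphere results, paralleling the use of $N_\epsilon^\mathcal{H}$ in Theorem~\ref{thm:bound:hamming}. The main obstacle I anticipate is the double-disjointness bookkeeping in the ball lower bound: one must verify carefully that distinct pairs $(\mathcal{F},(\ve{y}_1,\ldots,\ve{y}_t))$ really do produce distinct received sets. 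This is exactly the step in which the length separation between inserted and un-inserted sequences becomes essential and which accounts for the extra $\binom{M}{t}$ factor present here but absent in the Hamming case.
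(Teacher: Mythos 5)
Your proof is correct and follows essentially the same route the paper intends: the paper only states that Theorem~\ref{thm:bound:levenshtein} ``follows the same idea as the proof of Theorem~\ref{thm:bound:hamming},'' and your write-up is the natural fleshing-out of that sketch, correctly specializing to the insertion-only sub-channel so that the length separation explains both the numerator $\binom{2^L}{M-t}\binom{2^{L+\epsilon}}{t}$ and the extra $\binom{M}{t}$ factor. The disjointness bookkeeping you flag as the main obstacle is handled correctly (the $W_i$ are pairwise disjoint and the length-$L$ part of the received set identifies $\mathcal{F}$), so no gap remains.
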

The proof of Theorem \ref{thm:bound:levenshtein} follows the same idea as the proof of Theorem \ref{thm:bound:hamming}. For small $t$ and $\epsilon=1$, this bound implies a minimum redundancy of approximately $t (\log (L+2)-1)$.
\subsection{Asymptotic bounds}
We now derive asymptotic bounds for large $L$ on the redundancy for $(0,M,\epsilon)_\mathcal{H}$ and $(0,M,\epsilon)_\mathcal{L}$ error-correcting codes.
\begin{lemma} \label{lemma:substitutions:subset}
	Denote by $\mathcal{Y} \subseteq \mathcal{S} \in \mathcal{X}_M^L$ the largest set such that $\mathcal{Y}$ is an $\epsilon$-substitution correcting code. Then,
	$$|B_{t,\epsilon}^\mathcal{H}(\mathcal{S})| \geq \left\{ \begin{array}{ll}
	\left(B_{\epsilon}^\mathcal{H}\right)^{|\mathcal{Y}|}, & \text{if}\, |\mathcal{Y}| \leq t \\
	\binom{|\mathcal{Y}|}{t} \left(B_{\epsilon}^\mathcal{H}\right)^{t}, & \text{else} 
	\end{array} \right.$$
	where $B_{\epsilon}^\mathcal{H} = \sum_{i=1}^{\epsilon} \binom{L}{i}$.
\end{lemma}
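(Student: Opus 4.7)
The plan is to exploit the pairwise disjointness of Hamming balls around the elements of $\mathcal{Y}$. Since $\mathcal{Y}$ is an $\epsilon$-substitution correcting code, for any distinct $\ve{y}, \ve{y}' \in \mathcal{Y}$ one has $B_\epsilon^\mathcal{H}(\ve{y}) \cap B_\epsilon^\mathcal{H}(\ve{y}') = \emptyset$, and in particular no element of $\mathcal{Y}$ lies in the error ball around any other element of $\mathcal{Y}$. I will manufacture error patterns by choosing a subset $T \subseteq \mathcal{Y}$ of sequences to corrupt and then independently replacing each $\ve{y}_a \in T$ by some $\ve{y}'_a \in B_\epsilon^\mathcal{H}(\ve{y}_a)$, and then show that distinct patterns yield distinct received sets $R = (\mathcal{S} \setminus T) \cup \{\ve{y}'_a : a \in T\}$.

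In the regime $|\mathcal{Y}| \leq t$ I would take $T = \mathcal{Y}$, yielding $(B_\epsilon^\mathcal{H})^{|\mathcal{Y}|}$ corruption patterns. In the regime $|\mathcal{Y}| > t$ I would take $|T| = t$; this gives $\binom{|\mathcal{Y}|}{t}$ choices of $T$ and $(B_\epsilon^\mathcal{H})^t$ choices of corruptions, producing the second expression. The counting argument then reduces to verifying distinctness of the resulting received sets. For patterns with different subsets $T_1 \neq T_2$, pick $a_0 \in T_1 \setminus T_2$: then $\ve{y}_{a_0}$ belongs to the pattern-2 received set but, by ball disjointness, is not equal to any corruption $\ve{y}'_a$ with $a \in T_1$, so $\ve{y}_{a_0}$ is absent from the pattern-1 received set. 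For patterns sharing $T$ but differing at a coordinate $a_0$, the element $\ve{y}'_{a_0}$ belongs to $B_\epsilon^\mathcal{H}(\ve{y}_{a_0})$ and, again by disjointness, cannot coincide with $\ve{y}''_a$ for any $a \in T$ with $a \neq a_0$; combined with $\ve{y}'_{a_0} \neq \ve{y}''_{a_0}$, this forces the two received sets to differ.

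The subtlety I expect to be the main obstacle is the possibility that some corruption $\ve{y}'_{a_0}$ coincides with a sequence $\ve{z} \in \mathcal{S} \setminus \mathcal{Y}$ already present in $\mathcal{S} \setminus T$, in which case two different corruption tuples could in principle collapse to the same received set. I would address this in the spirit of the sphere-packing argument of Theorem~\ref{thm:bound:hamming}: either restrict the corruption choices to the part of $B_\epsilon^\mathcal{H}(\ve{y}_a)$ that lies outside $\mathcal{S}$, a restriction whose cost is negligible in the asymptotic regime $L = o(M)$ of interest in the paper, or identify each received set together with the indication of its erroneous positions, a convention under which the multiplication principle yields the claimed lower bound directly.
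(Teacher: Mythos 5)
Your overall strategy---exploit the pairwise disjointness of the $\epsilon$-balls around the elements of $\mathcal{Y}$ and multiply the per-ball counts over the $t$ (or $|\mathcal{Y}|$) corrupted positions---is the same as the paper's, and your distinctness argument for corruptions that land outside $\mathcal{S}$ is sound. But the subtlety you flag at the end is the crux of the lemma, and neither of your two patches closes it. Restricting the corruptions to $B_\epsilon^\mathcal{H}(\ve{y}_a)\setminus\mathcal{S}$ proves only a weaker bound in which $B_\epsilon^\mathcal{H}$ is replaced by $B_\epsilon^\mathcal{H}-|B_\epsilon^\mathcal{H}(\ve{y}_a)\cap\mathcal{S}|$; the lemma is a non-asymptotic statement quantified over every $\mathcal{S}\in\mathcal{X}_M^L$, and in the regime $M=2^{\beta L}$ a set $\mathcal{S}$ can contain an entire Hamming ball, so the deficit is not negligible---it can drive a factor to zero. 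Annotating each received set with its erroneous positions changes the object being counted: $B_{t,\epsilon}^\mathcal{H}(\mathcal{S})$ is by definition a collection of received \emph{sets}, and the sphere-packing step in Theorem~\ref{thm:substitutions:asymptotic} requires a lower bound on the number of distinct sets, since the decoder sees no such annotation.

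The paper closes exactly this gap with a device your proposal is missing: when a target $\ve{z}\in B_\epsilon^\mathcal{H}(\ve{y}_a)\cap\mathcal{S}$ is already present in the stored set, reverse the direction of the error---let the channel corrupt $\ve{z}$ into $\ve{y}_a$ rather than $\ve{y}_a$ into $\ve{z}$. The resulting received set is $\mathcal{S}$ with $\ve{z}$ removed (the corrupted copy of $\ve{z}$ merges with the still-present $\ve{y}_a$), which is again a legitimate outcome with one erroneous sequence charged to the ball of $\ve{y}_a$, and it is distinguishable from the outcomes produced by the other balls precisely because the balls around distinct elements of $\mathcal{Y}$ are disjoint. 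In this way every one of the $B_\epsilon^\mathcal{H}$ elements of $B_\epsilon^\mathcal{H}(\ve{y}_a)$ contributes a distinct received set whether or not it lies in $\mathcal{S}$, and the full products $\left(B_\epsilon^\mathcal{H}\right)^{|\mathcal{Y}|}$ and $\binom{|\mathcal{Y}|}{t}\left(B_\epsilon^\mathcal{H}\right)^{t}$ survive. With this modification your counting and distinctness arguments go through as written.
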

\begin{proof}
	In each of the distinct error balls $B_\epsilon^\mathcal{H}(\ve{x})$, $\ve{x} \in \mathcal{Y}$ we have at least $B_{\epsilon}^\mathcal{H} = |B_{\epsilon}^\mathcal{H}(\ve{x})|$ possible patterns of unique outcomes for $B_{t,\epsilon}^\mathcal{H}(\mathcal{S})$ by either adding an error to $\ve{x}$ such that a sequence in $B_\epsilon^\mathcal{H}(\ve{x}) \setminus \mathcal{S}$ is obtained or by adding an error to a sequence in $B_\epsilon^\mathcal{H}(\ve{x}) \cap \mathcal{S}$ such that $\ve{x}$ is obtained.
\end{proof}
\begin{thm} \label{thm:substitutions:asymptotic} The redundancy of any $(0,M,\epsilon)_\mathcal{H}$ error-correcting code $\mathcal{C} \subseteq \mathcal{X}_M^L$ satisfies asymptotically
	$$ r(\mathcal{C}) \gtrsim c M \log (B_\epsilon^\mathcal{H}), $$
	for any $0\leq c<1$, when $L \rightarrow \infty$ and $M = 2^{\beta L}$, $0<\beta <1$.
\end{thm}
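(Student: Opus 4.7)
My plan is to run a sphere-packing argument using the ball-size bound of Lemma \ref{lemma:substitutions:subset}, but only on codewords $\mathcal{S}$ whose largest $\epsilon$-correcting subset $\mathcal{Y}(\mathcal{S}) \subseteq \mathcal{S}$ is large, and to show separately that codewords with small $\mathcal{Y}(\mathcal{S})$ are too scarce to affect the bound when $M = 2^{\beta L}$ with $\beta < 1$. Fix $c \in [0,1)$ and call $\mathcal{S} \in \mathcal{X}_M^L$ \emph{good} if $|\mathcal{Y}(\mathcal{S})| \geq cM$ and \emph{bad} otherwise; split $\mathcal{C} = \mathcal{C}_{\text{good}} \cup \mathcal{C}_{\text{bad}}$.

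For the good part, taking $t = M$ in Lemma \ref{lemma:substitutions:subset} always triggers the first case, so $|B_{M,\epsilon}^\mathcal{H}(\mathcal{S})| \geq (B_\epsilon^\mathcal{H})^{|\mathcal{Y}(\mathcal{S})|} \geq (B_\epsilon^\mathcal{H})^{cM}$. Since the error balls around distinct codewords are pairwise disjoint and every received set has size at most $M$, the total output space is at most $\sum_{i=0}^M \binom{2^L}{i} \leq 2^{O(L)} \binom{2^L}{M}$, so sphere packing gives
\[
|\mathcal{C}_{\text{good}}| \;\leq\; 2^{O(L)} \binom{2^L}{M} \big/ (B_\epsilon^\mathcal{H})^{cM}.
\]

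The main obstacle is controlling $|\mathcal{C}_{\text{bad}}|$. By maximality of $\mathcal{Y}$, every $\ve{x} \in \mathcal{S}$ must lie within Hamming distance $2\epsilon$ of some $\ve{y} \in \mathcal{Y}(\mathcal{S})$; otherwise $\mathcal{Y}\cup\{\ve{x}\}$ would still have minimum distance $\geq 2\epsilon+1$, contradicting maximality. Hence, writing $V_{2\epsilon} = \sum_{i=0}^{2\epsilon}\binom{L}{i}$,
\[
|\mathcal{C}_{\text{bad}}| \;\leq\; \sum_{k=1}^{\lfloor cM \rfloor} \binom{2^L}{k}\binom{kV_{2\epsilon}}{M-k}.
\]
I would estimate this sum with $\binom{n}{m} \leq (en/m)^m$ and compare to $\binom{2^L}{M}$. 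The largest term sits near $k = cM$, and a direct Stirling calculation should yield a ratio of order $2^{-(1-c)(1-\beta)ML + O(M\log L)}$: the $(1-c)$ factor comes from the $M-k \geq (1-c)M$ exponent, and the crucial $(1-\beta)$ factor is exactly what the hypothesis $M = 2^{\beta L}$ produces. Since $\beta<1$, this bound is much smaller than $(B_\epsilon^\mathcal{H})^{-cM} = 2^{-O(cM\epsilon\log L)}$, the loss incurred on the good side, so bad codewords contribute only a lower-order term.

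Putting the two pieces together, $|\mathcal{C}| \leq (1+o(1))\, 2^{O(L)} \binom{2^L}{M}\big/(B_\epsilon^\mathcal{H})^{cM}$, which upon taking logarithms gives $r(\mathcal{C}) \geq cM \log(B_\epsilon^\mathcal{H}) - O(L)$; because $M = 2^{\beta L}$, the additive $O(L)$ is $o(M\log B_\epsilon^\mathcal{H})$ and the asymptotic bound follows. The delicate point is the bad-set Stirling comparison, where it is essential that $(1-c)(1-\beta)L$ is strictly positive — so the hypothesis $\beta<1$ cannot be weakened within this approach, as otherwise bad sets could become numerous enough to dominate.
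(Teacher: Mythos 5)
Your proof is correct and follows essentially the same route as the paper's: split codewords according to whether the largest $\epsilon$-correcting subset $\mathcal{Y}(\mathcal{S})$ has size at least $cM$, apply sphere packing with the ball-size lower bound of Lemma~\ref{lemma:substitutions:subset} (with $t=M$, so the first case applies) to the former, and show the latter are too few to matter. The only difference is that you actually supply the two steps the paper asserts without proof --- the covering argument that maximality of $\mathcal{Y}$ forces every element of $\mathcal{S}$ within distance $2\epsilon$ of $\mathcal{Y}$, and the Stirling comparison showing the count of such ``bad'' sets is smaller than $\binom{2^L}{M}$ by a factor $2^{-(1-c)(1-\beta)ML(1+o(1))}$ --- and both are carried out correctly.
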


\begin{proof}
	Denote by $D(c)$ the number of data words $\mathcal{S} \in \mathcal{X}_M^L$ which have a ball size $|B_{M,\epsilon}^\mathcal{H}(\mathcal{S})| < (B_\epsilon^\mathcal{H})^{cM}$, where $0\leq c < 1$. By Lemma \ref{lemma:substitutions:subset}, $D(c)$ is at most the number of data sets, which do not contain an $\epsilon$-error correcting code $\mathcal{Y} \subseteq \mathcal{S}$ of size at least $ct$. By a sphere packing argument, it follows that any $(0,M,\epsilon)_\mathcal{H}$ correcting code $\mathcal{C} \subseteq \mathcal{X}_M^L$ satisfies
	$$ |\mathcal{C}| \leq \frac{\sum_{i=cM}^{M}\binom{2^L}{i}}{(B_\epsilon^\mathcal{H})^{cM}} + D(c). $$
	It can be shown that the first term in this sum dominates the bound for all $0\leq c < 1$, when $M = 2^{\beta L}$, with $0<\beta < 1$.
\end{proof}
\begin{thm} \label{thm:insertions:asymptotic} The redundancy of any $(0,M,\epsilon)_\mathcal{L}$ error-correcting code $\mathcal{C} \subseteq \mathcal{X}_M^L$ satisfies asymptotically
	$$ r(\mathcal{C}) \gtrsim c M (\log (S_\epsilon^\mathcal{I})-\epsilon), $$
	for any $0\leq c<1$, when $L \rightarrow \infty$ and $M = 2^{\beta L}$, $0<\beta <1$.
\end{thm}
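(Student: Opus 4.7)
The plan is to mirror the proof of Theorem \ref{thm:substitutions:asymptotic}, replacing substitution balls with insertion balls and accounting for the extended ambient space $\mathbb{Z}_2^{L+\epsilon}$. First, I would establish an insertion analog of Lemma \ref{lemma:substitutions:subset}: for any $\mathcal{Y} \subseteq \mathcal{S}$ that forms an $\epsilon$-insertion-correcting code (equivalently, whose length-$(L+\epsilon)$ insertion balls are pairwise disjoint), and for $|\mathcal{Y}| \leq t$,
$$|B_{t,\epsilon}^\mathcal{L}(\mathcal{S})| \geq \bigl(S_\epsilon^\mathcal{I}\bigr)^{|\mathcal{Y}|}.$$
Indeed, inserting exactly $\epsilon$ bits into each sequence of $\mathcal{Y}$ yields $S_\epsilon^\mathcal{I}$ choices per sequence; disjointness guarantees that distinct combinations produce distinct received sets, and since the perturbed sequences have length $L+\epsilon$ while $\mathcal{S}\setminus\mathcal{Y}$ remains of length $L$, no accidental coincidences arise across the two groups.

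With this lemma in hand, I let $D(c)$ count those $\mathcal{S}\in\mathcal{X}_M^L$ which contain \emph{no} $\epsilon$-insertion-correcting subset of size at least $cM$, and partition the code $\mathcal{C} = \mathcal{C}_1 \cup \mathcal{C}_2$ accordingly. For $\mathcal{S}\in\mathcal{C}_1$ the lemma gives $|B_{M,\epsilon}^\mathcal{L}(\mathcal{S})|\geq (S_\epsilon^\mathcal{I})^{cM}$, and the corresponding received sets consist of $M-cM$ length-$L$ sequences together with $cM$ length-$(L+\epsilon)$ insertion outputs, of which there are at most $\binom{2^L}{M-cM}\binom{2^{L+\epsilon}}{cM}$. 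Standard sphere packing then yields
$$|\mathcal{C}|\leq\frac{\binom{2^L}{M-cM}\binom{2^{L+\epsilon}}{cM}}{(S_\epsilon^\mathcal{I})^{cM}}+D(c).$$
Taking logarithms with $\log\binom{2^L}{k}\approx kL-\log k!$ for $k\ll 2^L$ and combining with $\log|\mathcal{X}_M^L|\approx ML-\log M!$, the first term produces
$$r(\mathcal{C})\gtrsim cM\bigl(\log S_\epsilon^\mathcal{I}-\epsilon\bigr)-\log\binom{M}{cM},$$
where $\log\binom{M}{cM}\leq MH(c)=O(M)$ is dominated by the $\Theta(M\log L)$ main term under $M=2^{\beta L}$.

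The main obstacle is showing that the $D(c)$ term is asymptotically negligible compared with the sphere-packing term, exactly the step asserted without proof in the Hamming case. The intuition is that under $M=2^{\beta L}$ with $\beta<1$, each sequence conflicts (in the insertion metric) with only $O(\mathrm{poly}(L))$ neighbors, while $M$ grows exponentially in $L$, so a typical size-$M$ subset of $\mathbb{Z}_2^L$ contains a large insertion-correcting subcode. Making this precise would use either a greedy/Tur\'an-type extraction argument or a probabilistic concentration bound showing that $D(c)/\binom{2^L}{M}$ decays faster than the ratio $(S_\epsilon^\mathcal{I}/2^\epsilon)^{-cM}$, so that the first term of the bound indeed dominates for every fixed $0\leq c<1$.
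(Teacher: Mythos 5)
Your proposal follows essentially the same route as the paper, which proves this theorem in a single sentence by deferring to the substitution-case argument of Theorem \ref{thm:substitutions:asymptotic}: an insertion analog of Lemma \ref{lemma:substitutions:subset}, a split into typical sets and the exceptional count $D(c)$, and a sphere-packing bound whose logarithm yields $cM(\log S_\epsilon^\mathcal{I}-\epsilon)$ after absorbing the $\binom{M}{cM}$ factor. You actually supply more detail than the paper does, and you correctly identify that the domination of the sphere-packing term over $D(c)$ is asserted rather than proved in both the Hamming and Levenshtein cases.
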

Theorem \ref{thm:insertions:asymptotic} can be shown by noting that most balls $B_{M,\epsilon}^\mathcal{L}(\mathcal{S})$ have size at least $(S_\epsilon^\mathcal{I})^{cM}$, similar to the proof of Theorem \ref{thm:substitutions:asymptotic}.
\subsection{Bound for losses and errors}
\begin{thm}
	The redundancy of any $(s,t,L)_\mathcal{H}$ or $(s,t,L)_\mathcal{L}$ correcting code $\mathcal{C} \subseteq \mathcal{X}_M^L$ satisfies
	$$ r(\mathcal{C}) \geq (s+t) \log (2^L-M-t) + t\log (M-s-t) - \log (t!(s+t)!).$$
\end{thm}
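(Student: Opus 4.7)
The plan is a sphere-packing argument against the space of size-$(M-s)$ received sets, using a carefully chosen sub-ball of distinct received sets inside $B_{s,t,L}(\mathcal{S})$. I treat the Hamming case explicitly: since $\epsilon=L$ allows arbitrary substitutions, any corrupted sequence can be made equal to any element of $\mathbb{Z}_2^L\setminus\mathcal{S}$. The Levenshtein argument proceeds analogously, with the corrupted outputs chosen from an appropriate sub-ball inside $\mathbb{Z}_2^L$.

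Fix $\mathcal{S}\in\mathcal{C}$. I construct distinct received sets by choosing independently a surviving subset $\mathcal{U}\subseteq\mathcal{S}$ of size $M-s-t$ and a corrupted set $\mathcal{F}'\subseteq\mathbb{Z}_2^L\setminus\mathcal{S}$ of size $t$, and then forming $\mathcal{S}'=\mathcal{U}\cup\mathcal{F}'$. Any such $\mathcal{S}'$ is in the error ball: partition $\mathcal{S}\setminus\mathcal{U}$ arbitrarily as $\mathcal{L}\sqcup\mathcal{F}$ with $|\mathcal{L}|=s$ and $|\mathcal{F}|=t$, and corrupt the sequences of $\mathcal{F}$ into the sequences of $\mathcal{F}'$. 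Distinct pairs $(\mathcal{U},\mathcal{F}')$ give distinct $\mathcal{S}'$, because $\mathcal{U}=\mathcal{S}'\cap\mathcal{S}$ and $\mathcal{F}'=\mathcal{S}'\setminus\mathcal{S}$ are readable off from $\mathcal{S}'$. This produces $\binom{M}{s+t}\binom{2^L-M}{t}$ distinct received sets, all of size exactly $M-s$.

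Since balls of distinct codewords must be disjoint, sphere packing against the $\binom{2^L}{M-s}$ size-$(M-s)$ subsets of $\mathbb{Z}_2^L$ gives $|\mathcal{C}|\binom{M}{s+t}\binom{2^L-M}{t}\leq\binom{2^L}{M-s}$, and substituting into $r(\mathcal{C})=\log\binom{2^L}{M}-\log|\mathcal{C}|$ yields
\begin{equation*}
r(\mathcal{C})\geq \log\frac{\binom{2^L}{M}\binom{M}{s+t}\binom{2^L-M}{t}}{\binom{2^L}{M-s}}.
\end{equation*}

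The remaining step is purely algebraic: expanding every binomial in factorials, the $(2^L)!$ factors cancel and the ratio collapses to the clean form $\binom{M-s}{t}\binom{2^L-M+s}{s+t}$. Applying the elementary bound $\binom{n}{k}\geq (n-k)^k/k!$ to each of these two factors produces $(M-s-t)^t/t!$ and $(2^L-M-t)^{s+t}/(s+t)!$ respectively, and taking logarithms recovers exactly the claimed inequality. The main obstacle is identifying the right sub-ball so that the ball lower bound and the enclosing received-space size combine into this clean factorial collapse; once that is set up, the cancellation and the binomial bound are routine.
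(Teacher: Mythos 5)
Your proof follows the same sphere-packing argument as the paper: both count the $\binom{M}{s+t}\binom{2^L-M}{t}$ distinct size-$(M-s)$ received sets obtainable from a codeword (surviving subset of $\mathcal{S}$ plus $t$ corrupted sequences drawn from $\mathbb{Z}_2^L\setminus\mathcal{S}$) and pack them into the $\binom{2^L}{M-s}$ subsets of that size, arriving at the identical cardinality bound $|\mathcal{C}|\leq\binom{2^L}{M-s}\big/\big(\binom{M}{s+t}\binom{2^L-M}{t}\big)$. You additionally carry out the algebra the paper leaves implicit --- the collapse of the ratio to $\binom{M-s}{t}\binom{2^L-M+s}{s+t}$ and the estimate $\binom{n}{k}\geq(n-k)^k/k!$ --- and this checks out; the only caveat is that your deferral of the Levenshtein case to ``an appropriate sub-ball'' is no more complete than the paper's own one-line treatment of that case.
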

\begin{proof}
	Choosing $s+t$ sequences to be erroneous and letting each of the $t$ erroneous ones be one of the $2^L-M$ sequences in $\mathcal{X}_M^L \setminus \mathcal{S}$, we can use a sphere-packing argument to show that any $(s,t,L)_\mathcal{H}$ or $(s,t,L)_\mathcal{L}$ correcting code $\mathcal{C} \subseteq \mathcal{X}_M^L$ satisfies
	$|\mathcal{C}| \leq {\binom{2^L}{M-s}}\big/{\big(\binom{M}{t+s} \binom{2^L-M}{t}\big)}.$
\end{proof}

\bibliography{IEEEabrv,ref.bib}

\begin{thebibliography}{10}
\providecommand{\url}[1]{#1}
\csname url@samestyle\endcsname
\providecommand{\newblock}{\relax}
\providecommand{\bibinfo}[2]{#2}
\providecommand{\BIBentrySTDinterwordspacing}{\spaceskip=0pt\relax}
\providecommand{\BIBentryALTinterwordstretchfactor}{4}
\providecommand{\BIBentryALTinterwordspacing}{\spaceskip=\fontdimen2\font plus
\BIBentryALTinterwordstretchfactor\fontdimen3\font minus
  \fontdimen4\font\relax}
\providecommand{\BIBforeignlanguage}[2]{{%
\expandafter\ifx\csname l@#1\endcsname\relax
\typeout{** WARNING: IEEEtranS.bst: No hyphenation pattern has been}%
\typeout{** loaded for the language `#1'. Using the pattern for}%
\typeout{** the default language instead.}%
\else
\language=\csname l@#1\endcsname
\fi
#2}}
\providecommand{\BIBdecl}{\relax}
\BIBdecl

\bibitem{BGHCTIPC16}
M.~Blawat, K.~Gaedke, I.~H\"utter, X.~M. Chen, B.~Turczyk, S.~Inverso, B.~W.
  Pruitt, and G.~M. Church, ``{Forward error correction for DNA data
  storage},'' in \emph{Int. Conf. Computational Science}, San Diego, Jun. 2016,
  pp. 1011--1022.

\bibitem{BSSS90}
A.~E. Brouwer, J.~B. Shearer, N.~J.~A. Sloane, and W.~D. Smith, ``A new table
  of constant weight codes,'' \emph{{IEEE} Trans. Inf. Theory}, vol.~36, no.~6,
  pp. 1334--1380, Nov. 1990.

\bibitem{CGK12}
G.~M. Church, Y.~Gao, and S.~Kosuri, ``{Next-generation digital information
  storage in DNA},'' \emph{Science}, no. 6102, pp. 1628--1628, Sep. 2012.

\bibitem{EZ17}
Y.~Erlich and D.~Zielinski, ``{DNA fountain enables a robust and efficient
  storage architecture},'' \emph{Science}, no. 6328, pp. 950--954, Mar. 2017.

\bibitem{GBCDLSB13}
N.~Goldman, P.~Bertone, S.~Chen, C.~Dessimoz, E.~M. LeProust, B.~Sipos, and
  E.~Birney, ``{Towards practical, high-capacity, low-maintenance information
  storage in synthesized DNA},'' \emph{Nature}, no. 7435, pp. 77--80, Jan.
  2013.

\bibitem{GHPPS15}
R.~N. Grass, R.~Heckel, M.~Puddu, D.~Paunescu, and W.~J. Stark, ``{Robust
  chemical preservation of digital information on DNA in silica with
  error-correcting codes},'' \emph{Angewandte Chemie Int. Edition}, no.~8, pp.
  2552--2555, Feb. 2015.

\bibitem{HRT17}
R.~Heckel, I.~Shomorony, K.~Ramchandran, and D.~N.~C. Tse, ``{Fundamental
  limits of DNA storage systems},'' in \emph{IEEE Int. Symp. Inform. Theory},
  Aachen, Germany, Jun. 2017, pp. 3130--3134.

\bibitem{KPM16}
H.~M. Kiah, G.~J. Puleo, and O.~Milenkovic, ``{Codes for DNA sequence
  profiles},'' \emph{{IEEE} Trans. Inf. Theory}, vol.~62, no.~6, pp.
  3125--3146, Jun. 2016.

\bibitem{KC14}
S.~Kosuri and G.~Church, ``{Large-scale de novo DNA synthesis: technologies and
  applications},'' \emph{Nature Methods}, no.~5, pp. 499--507, May 2014.

\bibitem{KT18}
M.~Kova\v{c}evi\'c and V.~Y.~F. Tan, ``Codes in the space of multisets --
  coding for permutation channels with impairments,'' \emph{{IEEE} Trans. Inf.
  Theory}, Jan. 2018, (early access).

\bibitem{Lev66}
V.~I. Levenshtein, ``Binary codes capable of correcting deletions, insertions
  and reversals,'' \emph{Soviet Physics Doklady}, vol.~10, no.~8, pp. 707--710,
  Feb. 1966.

\bibitem{Lev01}
------, ``Efficient reconstruction of sequences,'' \emph{{IEEE} Trans. Inf.
  Theory}, vol.~47, no.~1, pp. 2--22, Jan. 2001.

\bibitem{Oetal17}
L.~Organick, S.~D. Ang, Y.~J. Chen, R.~Lopez, S.~Yekhanin, K.~Makarychev, M.~Z.
  Racz, G.~Kamath, P.~Gopalan, B.~Nguyen, C.~Takahashi, S.~Newman, H.~Y.
  Parker, C.~Rashtchian, G.~G. K.~Stewart, R.~Carlson, J.~Mulligan, D.~Carmean,
  G.~Seelig, L.~Ceze, , and K.~Strauss, ``{Scaling up DNA data storage and
  random access retrieval},'' \emph{bioRxiv}, Mar. 2017.

\bibitem{RRCHLHNJ13}
M.~G. Ross, C.~Russ, M.~Costello, A.~Hollinger, N.~Lennon, R.~Hegarty,
  N.~Nusbaum, and D.~Jaffe, ``{Characterizing and measuring bias in sequence
  data},'' \emph{Genome Biol.}, no.~5, May 2013.

\bibitem{Rot06}
R.~M. Roth, \emph{Introduction to Coding Theory}.\hskip 1em plus 0.5em minus
  0.4em\relax New York: Cambridge University Press, 2006.

\bibitem{Ten84}
G.~M. Tenengolts, ``Nonbinary codes, correcting single deletion or insertion,''
  \emph{{IEEE} Trans. Inf. Theory}, vol.~30, no.~5, pp. 766--769, 1984.

\bibitem{VT65}
R.~R. Varshamov and G.~M. Tenengolts, ``Codes which correct single asymmetric
  errors,'' \emph{Automation Remote Control}, vol.~26, no.~2, pp. 286--290,
  1965.

\bibitem{Wol06}
J.~K. Wolf, ``An introduction to tensor product codes and applications to
  digital storage systems,'' in \emph{IEEE Inform. Theory Workshop}, Chengdu,
  China, Oct. 2006, pp. 6--10.

\bibitem{YTMZM15}
S.~M. H.~T. Yazdi, Y.~Yuan, J.~Ma, H.~Zhao, and O.~Milenkovic, ``{A rewritable,
  random-access DNA-based storage system},'' \emph{Nature Scientific Reports},
  no. 14138, Aug. 2015.

\bibitem{YYLWLCKC14}
A.~K. Yim, A.~C.~S. Yu, J.~W. Li, A.~I.~C. Wong, J.~F.~C. Loo, K.~Chan, S.~K.
  Kong, and T.~F. Chan, ``The essential component in {DNA}-based information
  storage system: Robust error-tolerating module,'' \emph{Frontiers in
  Bioengineering and Biotechnology}, no.~49, pp. 1--5, Nov. 2014.

\end{thebibliography}
\bibliographystyle{IEEEtranS}

\end{document}